\newcommand{\pprotocol}[5]{{\begin{figure}[#4]
\begin{center}
\fbox{
\hbox{\quad
\begin{minipage}{0.9\textwidth}
#5
\end{minipage}
\quad} }
\caption{\label{#3} #2}
\end{center}
\end{figure} } }
\newcommand{\myprotocolh}[4]{\pprotocol{#1}{#2}{#3}{htp!}{#4}}
\newtheorem{definition}{Definition}[section]
\newtheorem{theorem}{Theorem}
\newtheorem{facten}[definition]{Fact}
\newtheorem{claim}[definition]{Claim}
\newtheorem{obs}[definition]{Observation}
\newtheorem{notat}[definition]{Notation}
\newtheorem{lemma}[definition]{Lemma}
\newtheorem{example}[definition]{Example}
\newtheorem{construction}[definition]{Construction}
\newcommand{\qed}{\hspace*{\fill} $\Box$}
\newenvironment{proof}{\noindent {\bf Proof:} \hspace{.677em}}{\qed}
\newcommand{\comp}{\stackrel{\mbox{\tiny C}}{{\equiv}}}
\newcommand{\indi}{\stackrel{\mbox{\tiny 1/p}}{{\approx}}}
\newcommand{\pr}{\Pr}
\newcommand{\poly}{\operatorname {poly}}
\newcommand{\xor}{\oplus}
\newcommand{\Real}{\operatorname{REAL}}
\newcommand{\Ideal}{\operatorname{IDEAL}}
\newcommand{\MultiShareGenDomain}{\operatorname{MultiShareGen}}
\newcommand{\MultiShareGenDomainWithAbort}{\operatorname{MultiShareGenWithAbort}}
\newcommand{\MPCWithDealer}{\operatorname{MPCWithD}}
\newcommand{\MPCWithDealerRange}{\operatorname{MPCWithDForRange}}
\newcommand{\FairMPC}{\operatorname{FairMPC}}
\newcommand{\MPC}{\operatorname{MPC}}
\newcommand{\MPCRange}{\operatorname{MPCForRange}}
\newcommand{\outValue}{{w}}
\newcommand{\outValueSim}{{w_S}}
\newcommand{\outValueVAR}{{C}}
\newcommand{\viewVAR}{{V}}
\newcommand{\outValueVar}{{c}}
\newcommand{\viewVar}{{v}}
\newcommand{\bitFiveP}[2]{{ \sigma_{#2} ^{#1} }}
\newcommand{\istar}{i^\star}
\newcommand{\partySet}{Q}
\newcommand{\badI}{B}
\newcommand{\abortedI}{D}
\newcommand{\abortedAfterPre}{D_0}
\newcommand{\Reconstruction}{\operatorname{Reconstruction}}
\newcommand{\abort}[1]{{``\operatorname*{\texttt{abort}}}_{#1}\textrm{''}}
\newcommand{\continueMSG}{``\operatorname{\texttt{continue}}\textrm{''}}
\newcommand{\proceedMSG}{``\operatorname{\texttt{proceed}}\textrm{''}}
\newcommand{\startMSG}{``\operatorname{\texttt{start}}\textrm{''}}
\def\partNum{{m}}
\def\badNum{{t}}
\def\goodNum{{m-t}}
\def\secParam{{n}}
\def\numRounds{{r}}
\def\3PartyBias{{8/r}}
\def\IOLvalue{{\sigma}}
\newcommand{\domainSize}{{d}}
\newcommand{\rangeSize}{{g}}
\newcommand{\simDealer}{{\S_T}}
\newcommand{\comment}[1]{}
\def\A{{\cal A}}
\def\D{{\cal D}}
\def\F{{\cal F}}
\def\J{{\cal J}}
\def\S{{\cal S}}
\newcommand{\NN}{{\mathbb N}}
\newcommand{\FF}{{\mathbb F}}
\newcommand{\defref}[1]{Definition~\ref{def:#1}}
\newcommand{\constructref}[1]{Construction~\ref{construct:#1}}
\newcommand{\lemref}[1]{Lemma~\ref{lem:#1}}
\newcommand{\secref}[1]{Section~\ref{sec:#1}}
\newcommand{\appref}[1]{Appendix~\ref{app:#1}}
\newcommand{\thmref}[1]{Theorem~\ref{thm:#1}}
\newcommand{\clmref}[1]{Claim~\ref{clm:#1}}
\newcommand{\eqnref}[1]{Equation~(\ref{eqn:#1})}
\newcommand{\figref}[1]{Figure~\ref{fig:#1}}
\newcommand{\stepref}[1]{Step~(\ref{stp:#1})}
\newcommand{\steprefs}[2]{Steps~\ref{stp:#1}--\ref{stp:#2}}
\newcommand{\set}[1]{\left\{#1\right\}}
\newcommand{\remove}[1]{}
\newcommand{\vect}[1]{(#1)}
\newcommand{\size}[1]{\left| #1 \right|}
\newcommand{\valueInnerSecretSharingShare}[3]{{ S^{#1,#2}_{{#3}}}}
\newcommand{\valueInnerSecretSharingShareSigned}[3]{{  R_{{#3}}^{#1,#2}}}
\newcommand{\valueInnerSecretSharingShareFromT}[3]{{ X^{#1,#2}_{{#3}}}}
\newcommand{\valueInnerSecretSharingShareSignedFromT}[3]{{  Y_{{#3}}^{#1,#2}}}
\newcommand{\superSet}[1]{{Q_{#1}}}
\newcommand{\masking}[2]{\operatorname{mask}_{#2}(#1)}
\newcommand{\compl}[1]{\operatorname{comp}(#1)}
\newcommand{\complParty}[2]{\operatorname{comp}_{#2}(#1)}
\newcommand{\win}[1]{\operatorname{win}(#1)}
\newcommand{\SD}[2]{\operatorname*{SD}\left(#1,#2\right)}
\newcommand{\valueMultiParty}[2]{{ \sigma_{#2}^{#1} }}
\newcommand{\valueMultiPartyFromT}[2]{{ \tau_{#2}^{#1} }}
\newcommand{\randomDomain}[2]{{\widehat{x}_{#1}}}
\newcommand{\partyInput}[1]{{x_{#1}}}
\newcommand{\initialInput}[1]{{y_{#1}}}
\newcommand{\partyDomain}{{X_{\secParam}}}
\newcommand{\partyRange}{Z_{\secParam}}
\newcommand{\domainLength}{{\ell_d}}
\newcommand{\rangeLength}{{\ell_r}}
\newcommand{\vecInput}{{\vec{y}}}
\newcommand{\range}[1]{[#1]}
\newcommand{\aux}{{\rm aux}}
\begin{document}

\begin{titlepage}
\title{Secure Multiparty Computation with Partial Fairness}
\author{ Amos Beimel\thanks{Supported by ISF grant 938/09 and by the Frankel Center for Computer Science.} \\
   Department of Computer Science \\ Ben Gurion University \\
   Be'er Sheva, Israel
   \and Eran Omri\thanks{This research was generously supported by the European Research Council as part of the ERC project ``LAST''.} \\
   Department of Computer Science \\ Bar Ilan University \\
   Ramat Gan, Israel
   \and Ilan Orlov\thanks{Supported by ISF grant 938/09 and by the Frankel Center for Computer Science.}  \\
   Department of Computer Science \\ Ben Gurion University \\
   Be'er Sheva, Israel
 }
    
\maketitle

\begin{abstract}
A protocol for computing a functionality is secure if an adversary in this
protocol cannot cause more harm  than in an ideal computation where parties
give their inputs to a trusted party which returns the output of the
functionality to all parties. In particular, in the ideal model such
computation is fair -- all parties get the output.  Cleve (STOC 1986)
proved that, in general, fairness is not possible without an honest
majority. To overcome this impossibility, Gordon and Katz (Eurocrypt 2010)
suggested a relaxed definition -- $1/p$-secure computation -- which
guarantees partial fairness. For two parties, they construct $1/p$-secure
protocols for functionalities for which the size of either their domain or their range is polynomial (in the security parameter).  Gordon and Katz ask whether their results can be extended to multiparty protocols.

We study $1/p$-secure protocols in the multiparty setting
for general functionalities.
Our main result is constructions of $1/p$-secure protocols when  the number of parties is
constant provided that less than 2/3 of the parties are corrupt.  Our
protocols require that either 
(1) the functionality is deterministic and
the size of the domain is polynomial (in the security parameter), or 
(2) the functionality can be randomized and the size of the range is polynomial.  
If the size of the domain is constant and the functionality is deterministic, then our
protocol is efficient even  when the number of parties is $O(\log \log
\secParam)$ (where $\secParam$ is the security parameter).  On the negative
side, we show that when the number of parties is super-constant,
$1/p$-secure protocols are not possible when the size of the domain is
polynomial.
\end{abstract}

\thispagestyle{empty}
\end{titlepage}


\section{Introduction}
\label{sec:intro}

A protocol for computing a functionality is secure if an adversary in this protocol cannot cause more harm than in an ideal computation where
parties give their inputs to a trusted party which returns the output of
the functionality to all parties. This is formalized by requiring that for every adversary in the real world, there is an adversary in the ideal
world, called simulator, such that the output of the real-world adversary
and the simulator are indistinguishable in polynomial time. Such security
can be achieved when there is a majority of honest
parties~\cite{GMW87}. Secure computation is fair -- all parties get the
output. Cleve~\cite{Cle86} proved that, in general, fairness is not
possible without an honest majority.

To overcome the impossibility of~\cite{Cle86}, Gordon and Katz~\cite{GK10} suggested a relaxed definition -- $1/p$-secure computation -- which
guarantees partial fairness. Informally, a protocol is $1/p$-secure if for
every adversary in the real world, there is a simulator running in the
ideal world, such that the output of the real-world adversary and the
simulator cannot be distinguished with probability greater than $1/p$. For two parties, Gordon and Katz construct $1/p$-secure protocols for functionalities whose size of either their domain or their range is polynomial (in the
security parameter). They also give impossibility results when both the
domain and range are super-polynomial. Gordon and Katz ask whether their
results can be extended to multiparty protocols. We give positive and
negative answers to this question.

\paragraph{Previous Results.}
Cleve~\cite{Cle86} proved that any protocol for coin-tossing without an
honest majority cannot be fully secure, specifically, if the protocol has
$\numRounds$ rounds, then it is at most $1/\numRounds$-secure. Protocols
with partial fairness, under various definitions and assumptions, have been
constructed for coin-tossing~\cite{Cle86,Cle90,MNS09,BOO10}, for contract
signing/exchanging secrets~\cite{Bl84,LMR83,EGL85,BGMR90,Dam95,BN00}, and
for general functionalities~\cite{Yao86,GHY87,BG89,GL90,Pin03,GMPY06,GK10}.
We next describe the papers that are most relevant to our paper.  Moran,
Naor, and Segev~\cite{MNS09} construct 2-party protocols for coin tossing
that are $1/\numRounds$-secure (where $\numRounds$ is the number of rounds
in the protocol). Gordon and Katz~\cite{GK10} define $1/p$-security and
construct 2-party $1/p$-secure protocols for every functionality whose
size of either the domain or the range of the functionality is polynomial.
Finlay, in a previous work~\cite{BOO10} we construct multiparty protocols
for coin tossing that are $O(1/\numRounds)$-secure provided that the
fraction of bad parties is slightly larger than half. In particular, our
protocol is $O(1/\numRounds)$-secure when the number of parties is constant
and the fraction of bad parties is less than 2/3.

Gordon et al.~\cite{GHKL08} showed that complete fairness is possible in the
two party case for some functions. Gordon and Katz~\cite{GK09} showed
similar results for the multiparty case. The characterization of the
functions that can be computed with full fairness without honest majority is open.
Completeness for fair computations has been studied in~\cite{GIMOS10}. 
Specifically, they show a specific function that is complete for fair two-party computation; this function is also complete for $1/p$-secure two-party computation.

\subsection{Our Results}
We study $1/p$-secure protocols in the multiparty setting.  We construct
two protocols for general functionalities assuming that the fraction of
corrupt parties is less than 2/3.  The first protocol is efficient when
(1) The number of parties is constant, the functionality is deterministic,
and the size of the domain of inputs is at most polynomial in the security
parameter, or (2) The number of parties is $O(\log\log\secParam)$ (where
$\secParam$ is the security parameter), the functionality is deterministic,
and the size of the domain of inputs is constant.
The second protocol is efficient when
the number of parties is constant, the functionality can be
randomized, and the size of the {\em range} of the functionality
is at most polynomial in the security parameter.
Our second protocol does not provide correctness, i.e., in a case of premature termination,
with probability of $1/\poly(\secParam)$, the remaining active parties output a value which might be inconsistent with their inputs.
In contrast, our first protocol provides correctness.

Our protocols combine ideas from the protocols of Gordon and
Katz~\cite{GK10} and our paper~\cite{BOO10}, both of which generalize the
protocol of Moran, Naor, and Segev~\cite{MNS09}. Specifically, our
protocols proceed in rounds, where in each round values are given to subsets
of parties. There is a special round $\istar$ in the protocol. Prior to
round $\istar$, the values given to a subset of parties are values that can  be computed from the inputs of the parties in this subset; staring from round $\istar$ the values are the
``correct'' output of the functionality. The values given to a subset are secret
shared such that only if all parties in the subset cooperate they can
reconstruct the value. If in some round many (corrupt) parties have
aborted such that there is a majority of honest parties among the active
parties, then the set of active parties reconstructs the value given to
this set in the previous round.\footnote{As parties can abort during this reconstruction, they actually
  reconstruct the value of a subset of this set.}
Similar to the protocols of~\cite{MNS09,GK10,BOO10}, the adversary can cause harm (e.g., bias
the output of the functionality) only if it guesses $\istar$; we show that in
our protocols this probability is small and the protocols are $1/p$-secure.
The values in our protocols are chosen similar to~\cite{GK10}. The mechanism to
secret share the values is similar to~\cite{BOO10}, however, there are
important differences in this sharing, as the sharing mechanism
of~\cite{BOO10} is not appropriate for $1/p$-secure computations of
functionalities which depend on inputs.

To complete the picture, we prove interesting impossibility results.  We
show that, in general, when the number of parties is super-constant,
$1/p$-secure protocols are not possible without honest majority when the
size of the domain is polynomial. This impossibility result justifies the
fact why in our protocols the number of parties is constant.  We also show
that, in general, when the number of parties is $\omega(\log \secParam)$,
$1/p$-secure protocols are not possible without honest majority even when
the size of the domain is 2. The proof of the impossibility result is rather
simple and follows from an impossibility result of~\cite{GK10}.

Our impossibility results should be contrasted with the coin-tossing
protocol of~\cite{BOO10} which is an efficient $1/p$-secure protocol even
when $\partNum(\secParam)$, the number of parties, is polynomial in the security parameter and the number of
bad parties is $\partNum(\secParam)/2+O(1)$. Our results show that these parameters are not
possible for general $1/p$-secure protocols even when the size of the
domain of inputs is 2.

\paragraph{Open Problems.}
In both our impossibility results the size of the range is
super-polynomial. It is open if there is an efficient $1/p$-secure protocol when the number of parties is not constant and the size of both the domain and range is polynomial. In addition, the impossibility results do not rule
out that the double-exponential dependency on the number of parties can be improved.

The protocols of~\cite{GK10} are private -- the adversary cannot learn any
information on the inputs of the honest parties (other than the information
that it can learn in the ideal world of computing $\F$). The adversary can
only bias the output. Our first protocol is not private (that is, the adversary can learn extra information).
However, we do not know whether the second protocol is private.\footnote{The problem in our protocols is that the adversary can keep one
corrupted party active, thus, the adversary can get the output of the
honest parties.}
It is open if there are general multiparty $1/p$-secure protocols that
are also private.

\section{Preliminaries}
\label{sec:prelim}
A multi-party protocol with $\partNum$ parties is defined by $\partNum$
interactive probabilistic polynomial-time Turing machines
$p_1,\ldots,p_\partNum$. Each Turning machine, called party, has the security parameter $1^\secParam$ as a
joint input and a private input $\initialInput{j}$.  The
computation proceeds in rounds. In each round, the active parties broadcast and receive messages on a common broadcast channel.  The number of rounds in the protocol is expressed as some function $\numRounds(\secParam)$ in the
security parameter (typically, $\numRounds(\secParam)$ is bounded by a polynomial).
At the end of the protocol, the (honest) parties should hold a common value
$\outValue$ (which should be equal to an output of a predefined
functionality).

In this work we consider a corrupt, static, computationally-bounded (i.e., non-uniform probabilistic polynomial-time) adversary that is allowed to corrupt some subset of parties. That is, before the beginning of the protocol, the adversary corrupts a subset of the parties and may instruct them to deviate from the protocol in an arbitrary way. The adversary has complete access to the internal state of each of the corrupted parties and fully controls the messages that they send throughout the protocol. The honest parties follow the instructions of the protocol.

The parties communicate via a synchronous network, using only a broadcast
channel. The adversary is rushing, that is, in each round the adversary
hears the messages broadcast by the honest parties before broadcasting the
messages of the corrupted parties for this round (thus, broadcast messages
of the corrupted parties can depend on the broadcast messages of the honest
parties in this round).  

\paragraph{Notation.}
For an integer $\ell$, define $[\ell]=\set{1,\dots,\ell}$.  For a set $J
\subseteq \range{\partNum}$, define $\superSet{J} = \set{p_j : j\in J}$.
An $\partNum$-party functionality $\F=\set{f_\secParam}_{\secParam \in
\NN}$ is a sequence of polynomial-time computable, randomized mappings
$f_\secParam:(\partyDomain)^\partNum \rightarrow \partyRange$,
where $\partyDomain=\set{0,1}^{\domainLength(\secParam)}$ and
$\partyRange=\set{0,1}^{\rangeLength(\secParam)}$ are the domain of inputs of
each party and the range respectively;
$\domainLength,\rangeLength:\NN\rightarrow\NN$ are some fixed functions.
We denote the size of the domain and the range of $\F$ by
$\domainSize(\secParam)$ and $\rangeSize(\secParam)$ respectively, that is,
$\domainSize(\secParam)=2^{\domainLength(\secParam)}$ and
$\rangeSize(\secParam)=2^{\rangeLength(\secParam)}$. For a
randomized mapping $f_\secParam$, the assignment $\outValue\gets
f_\secParam(x_1,\dots,x_\partNum)$ denotes the process of computing
$f_\secParam$ with the inputs $x_1,\dots,x_\partNum$ and with uniformly
chosen random coins and assigning the output of the computation to
$\outValue$. If $\F$ is deterministic, we sometimes call it a function.
We sometime omit $\secParam$ from functions of $\secParam$ 
(for example, we write $\domainSize$ instead of $\domainSize(\secParam)$).

\subsection{The Real vs.~Ideal Paradigm}
\label{sec:realIdeal}
The security of multiparty computation protocols is defined using the real
vs.~ideal paradigm. In this paradigm, we consider the real-world model, in
which  protocols are executed. We then formulate an ideal model for
executing the task. This ideal model involves a trusted party whose
functionality captures the security requirements from the task. Finally, we
show that the real-world protocol ``emulates''  the ideal-world  protocol:
For any real-life adversary $\A$ there  exists an ideal-model
adversary $\S$ (called simulator) such that the global output of an
execution of the protocol    with $\A$ in the real-world model is
distributed similarly to the global output of running  $\S$ in the ideal
model.
In both models there are $\partNum$ parties $p_1,\ldots,p_\partNum$ holding  
a common input $1^\secParam$ and private inputs $\initialInput{1},\ldots,\initialInput{\partNum}$
respectively, where $\initialInput{j} \in \partyDomain{}$ for $1\leq j \leq \partNum$.
\paragraph{The Real Model.}
Let $\Pi$ be an $\partNum$-party protocol computing $\F$. Let $\A$ be a
non-uniform  probabilistic polynomial time adversary that gets the input
$\initialInput{j}$ of each corrupted party $p_j$ and the auxiliary input
$\aux$.  Let $\Real_{\Pi,\A(\aux)}(\vecInput,1^\secParam)$, where $\vecInput=(\initialInput{1},\ldots,\initialInput{\partNum})$, be the
random variable consisting of the view of the adversary (i.e., the inputs of the corrupted parties and the messages it got) and the output of  the honest parties
following an execution of $\Pi$.

\paragraph{The Ideal Model.}
The basic ideal model we consider is a model without abort. 
Specifically, there is an adversary $\S$ which has corrupted a subset $B$ of the parties. 
The adversary $\S$ has some auxiliary input $\aux$.
An ideal execution for the computing $\F$ proceeds as follows:

\begin{description}

\item[Send inputs to trusted party:] The honest parties send their inputs to the  trusted party. 
The corrupted parties may either send their received input, or send some other input of the same
length (i.e., $\partyInput{j} \in \partyDomain{}$) to the trusted party,
or abort (by sending a special $\abort{j}$ message).             
Denote by $\partyInput{1},\ldots,\partyInput{\partNum}$ the
inputs received by the trusted party.
If $p_j$ does not send an input, then the trusted party
selects $\partyInput{j} \in \partyDomain{}$ with uniform distribution.\footnote{For the simplicity of the presentation of our protocols, we present a slightly different ideal world than the traditional one.
In our model there is no a default input in case of an ``abort''. However, the protocol can be presented in the traditional model, where a predefined default input is used if a party aborts.}

\item[Trusted party sends outputs:] The trusted party computes $f_\secParam(\partyInput{1},\ldots,\partyInput{\partNum})$
with uniformly random coins and sends the output to the parties.
\item[Outputs:] The honest parties output the value sent by the trusted party, 
the corrupted parties output nothing, and $\S$ outputs any arbitrary 
(probabilistic polynomial-time computable) function of its view (its inputs, the output, and the auxiliary input $\aux$).
\end{description}

Let $\Ideal_{\F,\S(\aux)}(\vecInput,1^\secParam)$ be the 
random variable consisting of the output of the adversary $\S$ in this ideal world
execution and the output of the honest parties in the execution.

\subsubsection{\boldmath{$1/p$}-Indistinguishability and \boldmath{$1/p$}-Secure Computation}
As explained in the introduction, some ideal functionalities for computing $\F$ cannot be implemented when there is no honest majority. 
We use $1/p$-secure computation, defined by~\cite{GK10}, to 
capture the divergence from the ideal worlds. 
\begin{definition}[\boldmath{$1/p$}-indistinguishability] 
A function $\mu(\cdot)$ is \emph{negligible} 
if for every positive polynomial $q(\cdot)$ 
and all sufficiently large $n$ it holds that $\mu(n) < 1/q(n)$.
A \emph{distribution ensemble} $X = \left\{X_{a,\secParam}\right\}_{a\in \D_\secParam, \secParam \in \NN}$ 
is an infinite sequence of random variables indexed by $a\in \D_\secParam$ and $n \in \NN$,
where $\D_\secParam$ is a domain that might depend on $\secParam$.
For a fixed function $p(\secParam)$, 
two distribution ensembles $X = \{X_{a,\secParam}\}_{a \in \D_\secParam,n\in \NN}$ and 
$Y = \{Y_{a,\secParam}\}_{a \in \D_\secParam,n\in \NN}$
are \emph{computationally $1/p$-indistinguishable}, denoted $X \indi Y$,
if for every non-uniform polynomial-time algorithm $D$ there exists a negligible function $\mu(\cdot)$ such that for every $n$ and every $a \in \D_n$,
$$ \Big| \pr[D(X_{a,\secParam}) = 1] - \pr[D(Y_{a,\secParam}) = 1] \Big| \leq \frac{1}{p(n)} + \mu(n).$$
\end{definition}
Two distribution ensembles are \emph{computationally indistinguishable},
denoted $X \comp Y$, if for every $c \in \NN$ they are computationally $\frac{1}{n^c}$ -indistinguishable.
\medskip

We next define the
notion of $1/p$-secure computation~\cite{GK10}.
The definition uses the standard real/ideal paradigm~\cite{Gol04,Can00},
except that we consider a completely fair ideal model 
(as typically considered in the setting of honest majority),
and require only $1/p$-indistinguishability rather than indistinguishability. 

\begin{definition}[\boldmath{$1/p$}-secure computation~\cite{GK10}]
\label{def:1Overp-security}
Let $p = p(n)$ be a function.  An $\partNum$-party protocol $\Pi$ is said
to  $1/p$-securely compute a functionality $\F$
where there are at most $\badNum(\secParam)$ corrupt parties, if for every non-uniform
probabilistic polynomial-time adversary $\A$ in the real model controlling 
at most $\badNum(\secParam)$ parties, there
exists a non-uniform probabilistic polynomial-time adversary $\S$  in the
ideal model, controlling the same parties as $\A$, such that the following
two distribution ensembles are computationally $1/p$-indistinguishable
$$
\set{\Ideal_{\F,\S(\aux)}(\vecInput,1^\secParam)}_{\aux \in \set{0,1}^*,\vecInput \in (\partyDomain)^\partNum,\secParam\in\NN}
\quad\indi\quad
  \set{\Real_{\Pi,\A(\aux)}(\vecInput,1^\secParam)}_{\aux \in \set{0,1}^*,\vecInput \in (\partyDomain)^\partNum,\secParam\in\NN}.
  $$
\end{definition}

We next define statistical distance between two random variables and the notion of perfect $1/p$-secure computation, which implies the notion of $1/p$-secure computation.

\begin{definition}[statistical distance] 
\label{def:statisticalDistance}
We define the \emph{statistical distance} between two random variables $A$ and $B$
as the function
$$\SD{A}{B} = \frac{1}{2}\sum_{\alpha}{\Big| \pr[A = \alpha] - \pr[B = \alpha] \Big|}.$$
\end{definition}
\begin{definition}[perfect \boldmath{$1/p$}-secure computation]
\label{def:1Overp-perfect-security}
An $\partNum$-party protocol $\Pi$ is said to perfectly $1/p$-secure compute a functionality $\F$ if 
for every non-uniform adversary $\A$ in the real model, there exists a polynomial-time adversary $\S$ 
in the ideal model such that for every $\secParam\in\NN$, for every $\vecInput \in (\partyDomain)^\partNum$, and for every ${\rm aux} \in\set{0,1}^*$
$$\SD{\Ideal_{\F,\S({\rm aux})}(\vecInput,1^\secParam)}{\Real_{\Pi,\A({\rm aux})}(\vecInput,1^\secParam)} \leq \frac{1}{p(n)}.$$
\end{definition}

Security with abort and cheat detection is defined in \appref{securityWithBlaming}.
The cryptographic tools we use are described in \appref{cryptoTools}.

\remove{
\subsection{A Useful Lemma}

We next introduce a game $\Gamma$ between a challenger and an unbounded adversary $\A$.
The game $\Gamma(\numRounds)$ proceeds as follows:
Given two arbitrary distributions $D_1$ and $D_2$,
\begin{enumerate}
        \item The challenger chooses $\istar$ uniformly at random from $\range{\numRounds}$
                                                        and then chooses $a_1,\ldots,a_\numRounds$ as follows:

\begin{itemize}
                                                                                \item For $1 \leq i<\istar$, it chooses $a_i \leftarrow D_1$.
                                                                                \item For $\istar\leq i \leq \numRounds$, it chooses $a_i \leftarrow D_2$.
                                                                        \end{itemize}
        \item The challenger and the adversary $\A$ interact in a sequence of at most $\numRounds$ rounds.
                                                        In round $i$:
                                                                        \begin{itemize}
\item The challenger gives $a_i$ to the adversary.
                                                                         \item The adversary responds by an $abort$ instruction that stops the game or by a $continue$ instruction that make the game proceed to the next round.
                                                                        \end{itemize}
        \item $\A$ declared as the winner in this game if it abort in round $\istar$.                                                      
\end{enumerate}
Let $\win{\numRounds}$ denote the maximum probability with which $\A$ wins this type of game consists of $\numRounds$ rounds.
For any arbitrary two distributions $D_1$ and $D_2$, let $\SD{D_1}{D_2}$ denote the statistical distance
between the two distributions. 

\begin{lemma}
        For any two distributions $D_1$ and $D_2$, it holds that 
        $\win{\numRounds} \leq \frac{1}{\numRounds} + \frac{\numRounds - 1}{\numRounds}\SD{D_1}{D_2}$.
\end{lemma}

\begin{proof}
        We prove this lemma by induction on the number of rounds in the game $\numRounds$.
        For $\numRounds=1$, the lemma is trivially true, therefore we show the analyze for the case 
        where $\numRounds=2$. To do that, we assume that the  unbounded adversary $\A$ is a deterministic one.
        The adversary $\A$ has three options: (1) respond by an $abort$ instruction in the first round
        (2) responds by an $abort$ instruction in the second round. (3) Do not abort at all.
        Obviously, the third option will make the adversary lose, therefore, we assume that 
        the adversary has a set $S$ of values from the support of $D_2$ such that, 
        $\A$ aborts in the first round if the value it sees is this set, i.e., if $a_1 \in S$, then, 
        $\A$ responds by an $abort$ instruction, otherwise, $\A$ responds by an $abort$ instruction
        in the second round.
        Therefore,

\begin{eqnarray*}
  \pr[\A \operatorname{wins}]
  & = & \pr[\A \operatorname{wins} \operatorname{at} \operatorname{i=1}] + 
                                         \pr[\A \operatorname{wins} \operatorname{at} \operatorname{i=2}] \\
  & = & \pr[\istar=1] \pr[\A \operatorname{wins} | \istar=1] + 
                                         \pr[\istar=2] \pr[\A \operatorname{wins} | \istar=2] \\
  & = & \frac{1}{2} \pr_{a_1\leftarrow D_2}[a_1 \in S] + 
                                         \frac{1}{2} \pr_{a_1\leftarrow D_1}[a_1 \not \in S] \\
  & = & \frac{1}{2} \pr_{a_1\leftarrow D_2}[a_1 \in S] + 
                                         \frac{1}{2} \left(1-\pr_{a_1\leftarrow D_1}[a_1 \in S] \right) \\
  & = & \frac{1}{2} + \frac{1}{2} \left( \pr_{a_1\leftarrow D_2}[a_1 \in S] - 
                                         \pr_{a_1\leftarrow D_1}[a_1 \in S] \right)\\
  & = & \frac{1}{2} + \frac{1}{2} \SD{D_1}{D_2}. 
\end{eqnarray*}
Where last equality is valid because we are looking at best adversary which knows the ``best'' set $S$,
which maximizes the difference between the probabilities according to the two distributions. 

Next, we assume the validity of the lemma for $\numRounds$ and prove for $\numRounds + 1$.
As before, we assume that $\A$ has a set of values $S$ from the support of $D_2$ such that, 
        $\A$ aborts in the first round if the value it sees is this set, i.e., if $a_1 \in S$, then, 
        $\A$ responds by an $abort$ instruction, otherwise, $\A$ proceeds to the next round and from this point
        the game has $\numRounds$ and it independent of of the previous round.
        Therefore, 
\begin{eqnarray*}
  \pr[\A \operatorname{wins}]
  & = & \pr[\A \operatorname{wins} \operatorname{at}  \istar=1] + 
                                         \pr[\A \operatorname{wins} \operatorname{at} i\in \set{2,\ldots,\numRounds+1}] \\
  & = & \pr[i=1] \pr[\A \operatorname{wins} | i=1] + 
                                         \pr[i\in \set{2,\ldots,\numRounds+1}] \pr[\A \operatorname{wins} | i\in \set{2,\ldots,\numRounds+1}] \\
  & = & \frac{1}{\numRounds+1} \pr_{a_1\leftarrow D_2}[a_1 \in S] + 
                                         \frac{\numRounds}{\numRounds+1} \pr_{a_1\leftarrow D_1}[a_1 \not \in S] 
                                         \pr[\A \operatorname{wins} \operatorname{at}  i\in \set{2,\ldots,\numRounds+1}]\\
  & = & \frac{1}{\numRounds+1} \pr_{a_1\leftarrow D_2}[a_1 \in S] + 
                                         \frac{\numRounds}{\numRounds+1} \left(1-\pr_{a_1\leftarrow D_1}[a_1 \in S] \right)
                                         \left(\frac{1}{\numRounds} + \frac{\numRounds-1}{\numRounds} \SD{D_1}{D_2}\right) \\
  & = & \frac{1}{\numRounds+1} \left( \pr_{a_1\leftarrow D_2}[a_1 \in S] + 1 
                                                                                                                                                                + (\numRounds-1)\SD{D_1}{D_2}+ \pr_{a_1\leftarrow D_1}[a_1 \in S]
                                                                                                                                                                -(\numRounds - 1) \pr_{a_1\leftarrow D_1}[a_1 \in S]\SD{D_1}{D_2}\right) \\
  & \leq & \frac{1}{\numRounds+1} \left( \SD{D_1}{D_2}+ 1 + (\numRounds - 1)\SD{D_1}{D_2} \right) \\
  & = & \frac{1}{\numRounds+1} \left( 1 + \numRounds \SD{D_1}{D_2} \right)
\end{eqnarray*}
\end{proof}
}

\remove{
\section{A Warmup: A 5-Party Protocol that Tolerates 3 corrupt Parties}
\label{sec:fivePartyWithOnLineDealer}
In this section we consider the case where $\partNum=5$ and $\badNum=3$,
i.e., a  $5$-party protocol where up to $3$ of the parties may be
corrupt.  We first sketch our  construction assuming there is a special
on-line trusted dealer.  This dealer interacts with the parties in rounds,
sharing bits to subsets of parties, and proceeds with the normal execution
as long as at least $4$ of the  parties are still
active. Following~\cite{MNS09}, and its followup works~\cite{GK10,BOO10},
there protocols proceeds in rounds and the dealer chooses at random a
special round $\istar$. Prior to this round the adversary gets no information
and if the corrupt parties abort the execution prior to $\istar$ then they
cannot affect the output of the honest parties. After round $\istar$, the
output of the protocol is fixed, thus, also in this case the adversary
cannot affect the output of the honest parties. The adversary can bias the
output of the honest parties only if it guesses $\istar$ and this will
happen with small probability.

We next give a more formal description of the protocol.
Denote the trusted dealer by $T$ and the parties by $p_1, \ldots, p_5$.
At the beginning of the protocol each party sends its input $x_j$ to the dealer.
If a corrupt party $p_j$ does not send its input, then we set $x_j=0$. 
In a preprocessing phase, the dealer $T$ 
selects uniformly at random the special round $\istar \in \set{1, \ldots, \numRounds}$.
The dealer computes $\outValue\gets f_\secParam(x_1,\dots,x_5)$.
Then, for every round $0\leq i < \istar$  and every $J \subseteq \set{p_1, \ldots, p_5}$ such that
$2 \leq |\partySet| \leq 3$ the dealer selects an output, denoted 
$\bitFiveP{i}{J}$, as follows (this output is returned by the parties in $\superSet{J}$ if 
the protocol terminates in round $i+1$ and $\superSet{J}$ is the set of the active
parties):
\begin{description}
\item[\sc Case I: $0\leq i < \istar$.]
  For every $p_j \in \partySet$ the dealer sets $y_j=x_j$ and for every $p_j \notin \partySet$ it chooses
$y_j$ independently with uniform distribution from the domain of $p_j$;
it computes the output $\bitFiveP{i}{\partySet} \gets f_\secParam(y_1,\dots,y_5)$.
\item[\sc Case II: $\istar \leq i \leq \numRounds$.]
The dealer sets $\bitFiveP{i}{\partySet}=\outValue$.
\end{description}

The dealer $T$ interacts with the parties in rounds, where round $i$,
for $1 \leq i \leq \numRounds$ consists of three phases:
\begin{description}
\item[First phase.]
The dealer sends to the adversary
all the bits $\bitFiveP{i}{J}$ such that all parties in $\superSet{J}$ are corrupted. 
\item[Second phase.]
The adversary sends to $T$ a list of parties that abort in the current round.
If there are less than 4 active parties (i.e., there are either $2$ or $3$ active parties),%
$T$ sends 
$\bitFiveP{i-1}{\partySet}$ to the active parties, where $\superSet{J}$ is the set of the active parties. The honest parties return this output and halt. 
\item[Third phase.]
If at least 4 parties are active, $T$ notifies the 
active parties that the protocol proceeds normally.
\end{description}
If after $\numRounds$ rounds, there are at least 4 active parties, $T$ sends 
$\outValue$ to all active parties and the honest parties output this bit.

As an example of a possible execution of the protocol, assume that $p_1$
aborts in round $4$ and $p_3$ and $p_4$ abort in round $26$. In this case,
$T$  sends $\bitFiveP{25}{\set{p_2,p_5}}$ to $p_2$ and $p_5$, which
return  this output.

Recall that the adversary obtains the value $\bitFiveP{i-1}{\partySet}$ if all the
parties in $\superSet{J}$ are corrupt.  If the adversary causes the dealer to halt
in round $i$, then, either there are two remaining active parties, both of
them must be honest, or there are three active parties and at most one of
them is corrupt. In either case, the adversary does not know
$\bitFiveP{i-1}{\partySet}$ in advance.  Furthermore, the values that dealer reveals
to the adversary prior to round $\istar$ are values that it can generate by himself from the inputs of the corrupt parties.

We next argue that any adversary can bias the output of the above protocol by at most 
$O(\domainSize(\secParam)^{O(1)}/\numRounds)$. As in the protocols of~\cite{MNS09,GK10,BOO10},
the adversary can only bias 
the output by causing the protocol to terminate in round $\istar$. In our protocol, if
in some round there are two values $\bitFiveP{i}{J}$ and $\bitFiveP{i}{J'}$ that the adversary can obtain such that
$\bitFiveP{i}{\partySet} \neq \bitFiveP{i}{\partySet'}$, then the adversary can deduce that
$i\neq \istar$.
Furthermore, the adversary might have some auxiliary information on the inputs of the honest parties, thus, the adversary might be able to deduce that a round is not $\istar$
even if all the values that it gets are equal.
However, there are $4$ values that the 
adversary can obtain in each round (i.e., the values of 3 sets containing 
two corrupt parties and the value of the set containing the 3 corrupt parties). 
We will show that for 
a round $i$ such that $i<\istar$, the probability 
that all these values are equal to a fixed value is $1/\domainSize(\secParam)^{O(1)}$, and
the bias the adversary can cause is $\domainSize(\secParam)^{O(1)}/\numRounds$.
}

\section{The Multiparty Secure Protocols}
\label{sec:MPC}
In this section we present our protocols. We start with a protocol that
assumes that either the functionality is deterministic and the size of the domain is polynomial, or that the functionality is randomized and both the domain and range of the functionality are polynomial. We then present a modification of the protocol that is $1/p$-secure 
for (possibly randomized) functionalities
if the size of the range is polynomial (even if the size of the domain of $\F$ is not polynomial).
The first protocol is more efficient for deterministic functionalities with polynomial-size domain.  
Furthermore, the first protocol has full correctness, while in the modified protocol, correctness is only guaranteed with probability $1-1/p$. 

Formally, we prove the following two theorems.
\begin{theorem}
\label{thm:mainDomain}
Let $\F = \set{f_\secParam:(\partyDomain)^\partNum \rightarrow \partyRange}$ be randomized 
functionality where the size of domain is $\domainSize(\secParam)$ and the size of the range is $\rangeSize(\secParam)$,
and let $p(\secParam)$ be a polynomial.
If enhanced trap-door permutations exist, then for any $\partNum$ and $\badNum$
such that $\partNum/2 \leq \badNum < 2\partNum/3$, 
and for any polynomial $p(\secParam)$ there is an
$\numRounds(\secParam)$-round $\partNum$-party $1/p(\secParam)$-secure protocol computing $\F$
tolerating up to $\badNum$ corrupt parties where 
$\numRounds(\secParam) = p(\secParam) \cdot \left(2 \cdot \domainSize(\secParam)^\partNum \cdot \rangeSize(\secParam) \cdot p(\secParam)\right)^{ 2^\badNum}$,
provided that $\numRounds(\secParam)$ is bounded by a polynomial in $\secParam$.
If $\F$ is deterministic, then there is a $\numRounds(\secParam)$-round  $1/p(\secParam)$-secure protocol for 
$\numRounds(\secParam) = p(\secParam) \cdot \domainSize(\secParam)^{\partNum \cdot 2^\badNum}$,
provided that $\numRounds(\secParam)$ is bounded by a polynomial in $\secParam$.
\end{theorem}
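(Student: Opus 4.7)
The plan is to adapt the online-trusted-dealer paradigm of \cite{MNS09,GK10,BOO10}. I would first describe an idealized protocol with an on-line trusted dealer $T$, and then compile $T$ away using a secure-with-abort preprocessing protocol for a ShareGen-style functionality (this is where the assumption of enhanced trap-door permutations enters, via a standard compiler). The dealer receives inputs $\vecInput$, samples a special round $\istar$ uniformly from $[\numRounds]$, computes $\outValue \gets f_\secParam(\vecInput)$, and for every round $i \in [\numRounds]$ and every subset $\partySet \subseteq [\partNum]$ with $\partNum - \badNum \leq |\partySet| < 2(\partNum - \badNum)$ prepares a value $\bitFiveP{i}{\partySet}$ as follows: for $i < \istar$, it samples uniform fresh inputs for the parties outside $\partySet$ and evaluates $f_\secParam$ on the combined vector; for $i \geq \istar$, it sets $\bitFiveP{i}{\partySet} = \outValue$. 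The value $\bitFiveP{i}{\partySet}$ is secret-shared among $\superSet{\partySet}$ (as in \cite{BOO10}) so that only the full cooperation of $\superSet{\partySet}$ recovers it. The hypothesis $\badNum < 2\partNum/3$ is used twice: it makes the allowed size range nonempty, since $2(\partNum-\badNum) > \badNum \ge \partNum-\badNum$, and it ensures that whenever the active set drops below $2(\partNum-\badNum)$ it must contain a strict honest majority.

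In each online round $i$ the currently-active parties broadcast their shares of $\bitFiveP{i}{\partySet}$ for the current active set $\partySet$; if enough parties abort to bring $|\partySet|$ below $2(\partNum-\badNum)$, the surviving honest majority reconstructs $\bitFiveP{i-1}{\partySet'}$ and halts, otherwise they proceed to round $i+1$. The real protocol replaces $T$ by a multiparty protocol for the ShareGen functionality, secure with abort and cheat detection (see \appref{cryptoTools} and \appref{securityWithBlaming}). If this preprocessing aborts with identified cheaters, the surviving parties restart on the reduced party set; after at most $\badNum$ restarts either the preprocessing completes or the surviving set has an honest majority, in which case $\F$ can be computed with full security via \cite{GMW87}.

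Security is proved by constructing, for every real-world adversary $\A$ corrupting a set $\badI$ with $|\badI| \leq \badNum$, an ideal-world simulator $\S$. $\S$ extracts the corrupt inputs during the emulated preprocessing, forwards them to the trusted party, receives $\outValue$, and then runs the dealer's strategy for $\A$ with its own uniformly chosen $\istar'$. Pre-$\istar$ values are drawn from the same distribution as in the real execution, and from $\istar$ onward the emulated values all equal $\outValue$; the only event that can create a distinguishing gap is $\A$ aborting precisely at the true $\istar$. The main technical obstacle is bounding the probability of this event. Per round, the only values the adversary observes are the $\bitFiveP{i}{\partySet}$ for subsets $\partySet$ contained entirely in $\badI$ with $|\partySet|\ge \partNum - \badNum$, and there are at most $2^\badNum$ such subsets. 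In the deterministic case each such $\bitFiveP{i}{\partySet}$ is determined by the choice of at most $\partNum$ inputs from $\partyDomain$, so the adversary's per-round ``view vector'' lies in a space of cardinality at most $\domainSize^{\partNum \cdot 2^\badNum}$. A ``guess-the-special-round'' hybrid argument in the spirit of \cite{MNS09,BOO10} then shows that any $\A$'s distinguishing advantage is bounded by (essentially) the size of this view space divided by $\numRounds$, which is at most $1/p$ for the stated round complexity. In the randomized case an extra factor of $\rangeSize$ per coordinate (for the independently-chosen output randomness) yields the $(2\cdot\domainSize^\partNum\cdot\rangeSize\cdot p)^{2^\badNum}$ bound, with the additional leading factor of $p$ absorbing the statistical error introduced by the preprocessing emulation and by the hybrid analysis itself.
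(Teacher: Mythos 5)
Your overall route is the paper's route: an on-line dealer protocol with a uniformly chosen special round $\istar$ and subset values $\valueMultiParty{i}{J}$, compiled into a real protocol via inner/outer secret sharing, signatures, and a secure-with-abort, cheat-detection preprocessing with restarts and an honest-majority fallback; the deviations in bookkeeping (subsets of size up to $2(\partNum-\badNum)-1$ rather than up to $\badNum$, and a slightly earlier premature-termination threshold) are harmless. However, your simulator has a genuine flaw: you have it forward the extracted corrupt inputs to the trusted party and obtain $\outValue$ \emph{immediately after the preprocessing}. In the fair (no-abort) ideal model the honest parties' output is fixed at that moment, computed on the originally submitted corrupt inputs. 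But in the real protocol, if premature termination occurs in a round $1<i<\istar$, the honest parties output $\valueMultiParty{i-1}{J}$, in which the inputs of every corrupt party that aborted \emph{after} the preprocessing are replaced by fresh uniform elements of $\partyDomain$. These two distributions can differ by a constant: take $f$ = XOR of $\partNum$ bits; an adversary that submits its inputs and then aborts $\partNum-\badNum$ corrupt parties in round $2$ (which is before $\istar$ with probability $1-1/\numRounds$) makes the real output a fresh uniform bit while your ideal output is a fixed bit. So your claim that the only distinguishing event is aborting exactly at $\istar$ fails for the simulator as you describe it. The fix, and what the paper's simulator does, is to delay the trusted-party call: query it either at premature termination before $\istar$ (sending only the inputs of the corrupt parties that are still active, letting the trusted party re-randomize the rest) or in the peeking phase of round $\istar$.

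There is also a gap in the randomized case. Your ``extra factor of $\rangeSize$ per coordinate for the independently-chosen output randomness'' implicitly assumes that on every input vector each range element has probability at least about $1/\rangeSize$; a general randomized $f_\secParam$ can assign exponentially small (or zero) probability to some range elements, in which case there is no useful lower bound on the probability that all pre-$\istar$ values equal the true output, and the guess-the-round lemma gives nothing. The paper needs an additional idea here: call an output heavy if its probability exceeds $1/(p_0\rangeSize)$ and light otherwise, bound the contribution of light outputs by a union bound (at most $1/p_0$), apply the guessing lemma only conditioned on a heavy output (per-round probability at least $\left(1/(\domainSize^\partNum p_0 \rangeSize)\right)^{2^\badNum-1}$), and optimize $p_0$. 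This truncation is exactly why the stated round complexity has the extra factor of $p(\secParam)$ \emph{inside} the base $\left(2\domainSize^\partNum\rangeSize\, p\right)^{2^\badNum}$; under your accounting one would expect $\numRounds \approx p\cdot(\domainSize^\partNum\rangeSize)^{2^\badNum}$ to suffice, which signals the missing step. Your counting for the deterministic case (at most $2^\badNum$ values per round, each costing a factor $\domainSize^{\partNum}$) is fine and matches the paper's Claim on equal values combined with the lemma of Gordon--Katz.
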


\begin{theorem}
\label{thm:mainRange}
Let $\F = \set{f_\secParam:(\partyDomain)^\partNum \rightarrow \partyRange}$ be randomized 
functionality where the size of the range $\rangeSize(\secParam)$ is polynomial in $\secParam$ and $\partNum$ is constant,
and let $p(\secParam)$ be a polynomial.
If enhanced trap-door permutations exist, then for $\badNum$ such that $\partNum/2 \leq \badNum < 2\partNum/3$ 
and for any polynomial $p(\secParam)$ there is an
$\numRounds(\secParam)$-round $\partNum$-party $1/p(\secParam)$-secure protocol computing $\F$
tolerating up to $\badNum$ corrupt parties where 
 $\numRounds(\secParam) = \Big((2p(\secParam))^{  2^\badNum +1} \cdot \rangeSize(\secParam)^{  2^\badNum} \Big)$.
\end{theorem}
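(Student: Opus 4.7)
The plan is to adapt the protocol of Theorem~\ref{thm:mainDomain} so that the round complexity no longer depends on the domain size $\domainSize(\secParam)$. Since an efficient exact representation of the output distribution $f_\secParam(x_J,\cdot)$ for a subset $J$ would require enumerating $\domainSize^{m-|J|}$ honest-party inputs, I would replace that computation by a single range-sampled value. This creates the inconsistency with inputs that is absorbed into the weakened correctness guarantee, and it is precisely where the $\rangeSize(\secParam)^{2^\badNum}$ factor (in place of $\domainSize(\secParam)^{\partNum \cdot 2^\badNum}$) will enter.

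Concretely, the modified protocol keeps the outer skeleton from Theorem~\ref{thm:mainDomain}: a preprocessing phase realized via \MultiShareGenRange{} (which is securely computable with abort under the assumption $\badNum < 2\partNum/3$, since the active-party honest majority lets us use standard secure-with-abort MPC) chooses a special round $\istar\in [\numRounds]$ uniformly, sets up for each round $i$ and each admissible active-party subset $\superSet{J}$ a value $\bitFiveP{i}{J}$ which is secret-shared so that only all of $\superSet{J}$ can reconstruct it, and distributes the shares. The only change is how $\bitFiveP{i}{J}$ is produced for $i < \istar$: rather than computing the exact distribution over $f_\secParam$ conditioned on the inputs of $\superSet{J}$, the dealer draws uniformly random inputs for the parties outside $\superSet{J}$ once and evaluates $f_\secParam$ once; from round $\istar$ onward the value is the true output $\outValue \gets f_\secParam(x_1,\dots,x_\partNum)$. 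The simulator $\S$ is the natural analogue of the one for Theorem~\ref{thm:mainDomain}: extract the corrupt inputs during share-generation (via the secure-with-abort simulator), query the trusted party once to receive $\outValue$, pick $\istar$ uniformly, and emulate the dealer with uniform stand-in inputs for the honest parties before $\istar$ and with $\outValue$ from $\istar$ onward; when the active set drops below the reconstruction threshold, return the previously reconstructed $\bitFiveP{i-1}{J}$ to the trusted party as the honest parties' output.

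The main technical obstacle is to bound the distinguishing advantage by $1/p(\secParam)$. Following the template of~\cite{MNS09,GK10,BOO10} and Theorem~\ref{thm:mainDomain}, I would argue that the only harmful adversary behavior is to correctly identify the transition round $\istar$ and abort exactly then. In each round the adversary observes at most $2^\badNum-1$ range-valued quantities (one per subset $J$ all of whose members are either corrupt or have previously aborted). For $i<\istar$ these are jointly drawn from the range-sampling distribution $D_1$, and for $i\geq \istar$ from the true-output distribution $D_2$; both are supported on a set of size at most $\rangeSize^{2^\badNum-1}$, so $\SD{D_1}{D_2}$ can be bounded by a function of $\rangeSize$ alone. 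Plugging this bound into the ``abort at the guessed round'' hybrid analysis (a straightforward generalization of the two-distribution game underlying the Moran--Naor--Segev construction) yields a distinguishing advantage of order $\tfrac{1}{\numRounds}\bigl(1 + (2\rangeSize)^{2^\badNum}\bigr)$, up to a negligible term from the security-with-abort implementation of \MultiShareGenRange{}. Setting $\numRounds = (2p)^{2^\badNum+1} \rangeSize^{2^\badNum}$ makes this at most $1/p$. The hardest part will be the careful accounting of the $2^\badNum-1$ correlated range values across admissible subsets, together with verifying that the correctness-failure event (the reconstructed value coming from $D_1$ after a premature termination) contributes only an additional $O(1/p)$ term to the statistical distance.
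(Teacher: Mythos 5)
There is a genuine gap, and it sits exactly at the point your last sentence defers. Your concrete protocol description (``the dealer draws uniformly random inputs for the parties outside $\superSet{J}$ once and evaluates $f_\secParam$ once'' for $i<\istar$) is not a modification at all --- it is precisely the dealer of the first protocol, and its analysis inherently depends on the domain: the quantity one must lower-bound (via \clmref{alphaGuessingi*}) is the probability that \emph{every} value $\valueMultiParty{i}{J}$ seen by the adversary in a round $i<\istar$ equals the true output $\outValue$, and for an $f_\secParam$ with a huge domain this probability can be as small as $\domainSize^{-\Theta(\partNum 2^\badNum)}$, so no choice of $\numRounds$ depending only on $\rangeSize$ and $p$ works. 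If instead you mean your opening suggestion --- replace the input-based values before $\istar$ entirely by uniform range samples --- the guessing bound becomes fine, but the protocol is no longer $1/p$-secure for a different reason: an adversary that aborts already in round $1$ forces the honest parties to output a uniformly random element of $\partyRange$, which is at \emph{constant} statistical distance from anything the ideal-world honest parties can output, so the correctness-failure event you hope contributes $O(1/p)$ in fact occurs with probability $1$ and contributes $\Omega(1)$. Relatedly, your claim that $\SD{D_1}{D_2}$ ``can be bounded by a function of $\rangeSize$ alone'' because both distributions have support of size at most $\rangeSize^{2^\badNum-1}$ is false: statistical distance is not controlled by support size (two distributions on two points can be at distance $1$), and the MNS-style game lemma used here (\clmref{alphaGuessingi*}) requires a lower bound on the probability of hitting the true output, not a support bound.

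The missing idea, which is the paper's actual construction, is a probabilistic \emph{mixture}: for each $i<\istar$ and each admissible $J$, with probability $1/(2p)$ set $\valueMultiParty{i}{J}$ to a uniform element of $\partyRange$, and with probability $1-1/(2p)$ compute it from the inputs exactly as in the first protocol. The mixture gives both properties simultaneously: (i) every range value occurs with probability at least $1/(2p\cdot\rangeSize)$ per subset, so all $<2^\badNum$ values equal any fixed $\outValue$ with probability at least $\bigl(1/(2p\,\rangeSize)\bigr)^{2^\badNum}$, and \clmref{alphaGuessingi*} bounds the probability of aborting exactly at $\istar$ by $(2p\,\rangeSize)^{2^\badNum}/\numRounds \le 1/(2p)$ for the stated $\numRounds$; and (ii) upon premature termination before $\istar$ the reconstructed value is input-inconsistent only with probability $1/(2p)$, which is exactly the event where the simulator gives up (outputs $\bot$), adding a second $1/(2p)$ term and yielding total distance at most $1/p$. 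Without the $1/(2p)$-mixture neither your ``unchanged'' reading nor your ``pure range-sampling'' reading achieves the theorem.
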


Following~\cite{MNS09,BOO10}, we present the first protocol in two stages. We first describe in \secref{MPCwithDealer} a protocol with a dealer and then in \secref{eliminatingDealer} present a protocol without this dealer. The goal of presenting the protocol in  two stages is to simplify the understanding of the protocol and to enable to prove the protocol in a modular way.  In \secref{polynomialRange}, we present a modification of the protocol which is $1/p$-secure if the size of the range is polynomial
(even if the size of the domain of $f$ is not polynomial).

\subsection{The Protocol for Polynomial-Size Domain with a Dealer}
\label{sec:MPCwithDealer}

We consider a network with $\partNum$ parties where at most $\badNum$
of them are corrupt such that $\partNum/2 \leq \badNum \leq 2\partNum/3$.
In this section we assume that  there is a special trusted on-line dealer, denoted $T$. This dealer interacts with the parties
in rounds, sending messages on private channels.  We assume that the dealer
knows the set of corrupt parties. In \secref{eliminatingDealer}, we show how to remove this
dealer and construct a protocol without a dealer.

In our protocol the dealer sends in each round values to subsets of parties;
the protocol proceeds with the normal
execution as long as at least $\badNum+1$ of the  parties are still
active. If at some round $i$, there are at most $\badNum$ active parties,
then the active parties reconstruct the value given to them in round
$i-1$, output this value, and halt.  Following~\cite{MNS09}, and its
follow up works~\cite{GK10,BOO10}, the
dealer chooses at random with uniform distribution a special round $\istar$. Prior to this round the
adversary gets no information and if the corrupt parties abort the
execution prior to $\istar$, then they cannot bias the output of the
honest parties or cause any harm. After round $\istar$, the output of the protocol is fixed,
and, also in this case the adversary cannot affect the output of the
honest parties. The adversary cause harm 
only if it guesses $\istar$ and this happens with small probability.

\myprotocolh{}
{Protocol $\MPCWithDealer_\numRounds$.}{fig:MPCWithDealer}{
\begin{description} 
\item[]{\hspace*{-0.63cm} \bf Inputs:} 
        Each party $p_j$ holds a private input $\initialInput{j} \in \partyDomain{}$ and the joint input:     
  the security parameter $1^\secParam$, the number of rounds $\numRounds=\numRounds(\secParam)$,
                and a bound $\badNum$ on the number of corrupted parties.

\item[] {\hspace*{-0.63cm} \bf Instructions for each honest party $p_j$:} 
 (1) After receiving the $\startMSG$ message, send input $\initialInput{j}$ to the dealer.
 (2) If the premature termination step is executed with $i=1$, then send its input $\initialInput{j}$ to the dealer.
 (3) Upon receiving output $z$ from the dealer, output $z$. 
        (Honest parties do not send any other messages throughout the protocol.)

\item[] {\hspace*{-0.63cm} \bf Instructions for the (trusted) dealer:}

\item[The preprocessing phase:] \quad
\begin{enumerate}
\item Set $\abortedAfterPre = \emptyset$ and send a $\startMSG$ message to all parties.
\item
Receive an input, denoted $\partyInput{j}$, from each party $p_j$.
For every $p_j$ that sends an $\abort{j}$
message, notify all parties that party $p_j$ aborted, select $\partyInput{j} \in \partyDomain{}$ with uniform distribution, and update $\abortedAfterPre = \abortedAfterPre \cup \set{j}$.

\item Let $\abortedI = \abortedAfterPre$. If $\size{\abortedI} \geq \goodNum$, go to premature termination with $i=1$.
\item \label{stp:calcOutput}Set $\outValue \gets f_\secParam(\partyInput{1},\ldots,\partyInput{\partNum})$ and select $\istar \in \set{1, \ldots, \numRounds}$ with uniform distribution.
\item \label{stp:beforeIStar1}For each $1\le i < \istar$, for each $J \subseteq
\range{\partNum}\setminus \abortedAfterPre$ s.t. $\goodNum \leq \size{J} \leq \badNum$:
        for each $j\in J$ set $\randomDomain{j}{i} = \partyInput{j}$,
        for each $j\not\in J$ select uniformly at random $\randomDomain{j}{i} \in \partyDomain{}$, and set $\valueMultiParty{i}{J} \gets
f_\secParam(\randomDomain{1}{i},\ldots,\randomDomain{\partNum}{i})$.
 
\item For each $\istar\le i \le \numRounds$ and for each
$J \subseteq \range{\partNum}\setminus \abortedAfterPre$ s.t. $\goodNum \leq \size{J} \leq \badNum$,
set $\valueMultiParty{i}{J}= \outValue$.
\item Send $\proceedMSG$ to all parties.
\end{enumerate}

\item[Interaction rounds:] In each round $1\le i\le \numRounds$, interact with 
  the parties in three phases:
  \begin{itemize}
        \item {\bf The peeking phase:} 
                For each $J \subseteq \range{\partNum}\setminus \abortedAfterPre$ s.t. $\goodNum \leq \size{J} \leq \badNum$,
                if $\superSet{J}$ contains only corrupt parties,
                send the value $\valueMultiParty{i}{J}$ to all parties in $\superSet{J}$.
        \item   {\bf The abort phase:}  Upon receiving an $\abort{j}$
                message from a party $p_j$, notify all parties  that party
                $p_j$ aborted (ignore all other types of messages) and update
                $\abortedI  = \abortedI  \cup \set{j}$.
                If $\size{\abortedI } \geq \goodNum$, go to  premature termination step.
        \item {\bf The main phase:} Send $\proceedMSG$ to all parties.
  \end{itemize}

\item[Premature termination step:]\quad
\begin{itemize}
                \item If $i=1$, then: Receive an input, denoted $\partyInput{j}'$, from each active party $p_j$.
                                        For every party $p_j$ that sends an $\abort{j}$ message,
                                        update $\abortedI  = \abortedI  \cup \set{j}$ and
                                        select $\partyInput{j}' \in \partyDomain{}$ with uniform distribution. Set $\outValue' \gets f_\secParam(\partyInput{1}',\ldots,\partyInput{\partNum}')$.
                \item Else, if $i>1$, then: For each $\abort{j}$ message received from a party $p_j$, update $\abortedI  = \abortedI  \cup \set{j}$.
        Set $\outValue' = \valueMultiParty{i-1}{J}$ for $J = \range{\partNum} \setminus \abortedI $.
                        
                \item Send $\outValue'$  to each party $p_j$ s.t. $j\notin \abortedAfterPre$ and halt.
\end{itemize}

\noindent  
\item[Normal termination:]
If the last round of the protocol is completed, send $\outValue$ to to each party $p_j$ s.t. $j\notin \abortedAfterPre$ . 

\end{description}
}

We next give a verbal description of the protocol.
This protocol is designed such that the dealer can be removed from it in \secref{eliminatingDealer}. 
A formal description is given in \figref{MPCWithDealer}.
At the beginning of the protocol each party sends its input $\initialInput{j}$ to the dealer. The corrupted parties may send any values of their choice. Let $\partyInput{1},\ldots,\partyInput{\partNum}$ denote the inputs received by the dealer.
If a corrupt party $p_j$ does not send its input, then the dealer sets $\partyInput{j}$
to be a random value selected uniformly from $\partyDomain{}$.
In a preprocessing phase, the dealer $T$ 
selects uniformly at random a special round $\istar \in [\numRounds]$.
The dealer computes $\outValue\gets f_\secParam(\partyInput{1},\dots,\partyInput{\partNum})$.
Then, for every round $1\leq i < \numRounds$  and every $J \subseteq \set{1, \ldots, \partNum}$ such that
$\partNum-\badNum \leq |J| \leq \badNum$, the dealer selects an output, denoted 
$\bitFiveP{i}{J}$, as follows (this output is returned by the parties in $\superSet{J} = \set{p_j : j\in J}$ if 
the protocol terminates in round $i+1$ and $\superSet{J}$ is the set of the active
parties):
\begin{description}
\item[\sc Case I: $1\leq i < \istar$.]
  For every $j \in J$ the dealer sets $\randomDomain{j}{i}=\partyInput{j}$ and for every $j \notin J$ it chooses
$\randomDomain{j}{i}$ independently with uniform distribution from the domain $\partyDomain$;
it computes the output $\valueMultiParty{i}{J} \gets f_\secParam(\randomDomain{1}{i},\dots,\randomDomain{\partNum}{i})$.
\item[\sc Case II: $\istar \leq i \leq \numRounds$.]
The dealer sets $\bitFiveP{i}{J}=\outValue$.
\end{description}

The dealer $T$ interacts with the parties in rounds, where in round $i$,
for $1 \leq i \leq \numRounds$, there are of three phases:
\begin{description}
\item[The peeking phase.]
The dealer sends to the adversary
all the values $\bitFiveP{i}{J}$ such that all parties in $\superSet{J}$ are corrupted. 
\item[The abort and premature termination phase.]  The adversary sends to $T$ the identities of the
parties that abort in the current round.
If there are less than $\badNum+1$ active parties,%
~then $T$ sends 
$\valueMultiParty{i-1}{J}$ to the active parties, where $\superSet{J}$ is the set of the active parties when parties can also abort during this phase (see exact details in \figref{MPCWithDealer}). The honest parties return this output and halt. 
\item[The main phase.]
If at least $\badNum+1$ parties are active, $T$ notifies the 
active parties that the protocol proceeds normally.
\end{description}
If after $\numRounds$ rounds, there are at least $\badNum+1$ active parties, $T$ sends 
$\outValue$ to all active parties and the honest parties output this value.

\begin{example}
As an example, assume that $\partNum=5$ and
$\badNum=3$. In this case the dealer computes a value $\valueMultiParty{i}{J}$ for every set of size 2 or 3.
Consider  an execution of the protocol where
$p_1$
aborts in round $4$ and $p_3$ and $p_4$ abort in round $100$. In this case,
$T$  sends $\valueMultiParty{99}{\set{2,5}}$ to $p_2$ and $p_5$, which
return  this output.
\end{example}


The formal proof of the $1/p$-security of
the protocol appears in \appref{OnlineProof}.
We next hint why for deterministic functionalities, any adversary can cause harm in the above protocol by
at most  $O(\domainSize^{O(1)}/\numRounds)$, where $\domainSize  = \domainSize(\secParam)$ is the
size of the domain of the inputs and the number of parties, i.e.,
$\partNum$, is constant. As in the protocols of~\cite{MNS09,GK10,BOO10},
the adversary can only cause harm by causing the protocol to
terminate in round $\istar$. In our protocol, if in some round there are
two values $\valueMultiParty{i}{J}$ and $\valueMultiParty{i}{J'}$ that the
adversary can obtain such that $\valueMultiParty{i}{J} \neq
\valueMultiParty{i}{J'}$, then the adversary can deduce that $i<\istar$.  Furthermore, the adversary might have some auxiliary information
on the inputs of the honest parties, thus, the adversary might be able to
deduce that a round is not $\istar$ even if all the values that it gets are
equal.  However, there are less than  $2^\badNum$ values that the adversary
can obtain in each round (i.e., the values of subsets of the $\badNum$
corrupt parties of size at least $\partNum-\badNum$).  We will show that
for  a round $i$ such that $i<\istar$, the probability  that all these
values are equal to a fixed value is $1/\domainSize^{O(1)}$ for a
deterministic function $f_\secParam$ (for a randomized functionality this
probability also depends on the size of the range).  By~\cite[Lemma 2]{GK10}, the protocol is
$\domainSize^{O(1)}/\numRounds$-secure.

\subsection{Eliminating the Dealer of the Protocol}
\label{sec:eliminatingDealer}

We eliminate the trusted on-line dealer in a few steps using a few layers
of secret-sharing schemes.  First, we change the  on-line dealer, so that,
in each round $i$, it shares the value $\valueMultiParty{i}{J}$ of each subset
$\superSet{J}$ among the parties of $\superSet{J}$ using a $|J|$-out-of-$|J|$ secret-sharing
scheme -- called \emph{inner} secret-sharing scheme.
As in Protocol $\MPCWithDealer_\numRounds$ described in \figref{MPCWithDealer}, the
adversary is able to obtain information on $\valueMultiParty{i}{J}$ only if it
controls all the parties in $\superSet{J}$. On the other hand,
the honest parties can reconstruct $\valueMultiParty{i-1}{J}$
(without the dealer),
where $\superSet{J}$ is the set of active parties containing the honest parties. 
In the reconstruction, if an active (corrupt) party does not give its share, then
it is removed from the set of active parties $\superSet{J}$. This is possible since in the case of a premature termination an honest majority among the active parties is guaranteed (as further explained below).

Next, we convert the on-line dealer to an off-line dealer. That is, we construct a protocol
in which the dealer sends only one message to each party in an
initialization stage; the parties interact in rounds using a broadcast
channel (without the  dealer) and in each round $i$ each party learns its
shares of the $i$th round inner secret-sharing schemes. In each round
$i$, each party $p_j$ learns a share of $\valueMultiParty{i}{J}$ in a
$|J|$-out-of-$|J|$ secret-sharing scheme, for every set $\superSet{J}$ such that
$j\in J$ and  $\partNum-\badNum \leq |J| \leq \badNum$
(that is, it learns the share of the inner scheme).  For this
purpose, the dealer computes, in a preprocessing  phase, the appropriate
shares for the inner secret-sharing scheme. For each round, the shares of 
each party $p_j$ are then shared in a $2$-out-of-$2$ secret-sharing scheme,
where $p_j$ gets one of the two shares (this share is a mask, enabling
$p_j$ to privately reconstruct its shares of the appropriate
$\valueMultiParty{i}{J}$ although messages are sent on a broadcast
channel). All other parties get shares in a $\badNum$-out-of-$(\partNum-1)$ Shamir
secret-sharing scheme  of the other share of the 2-out-of-2 secret-sharing.
See \constructref{shareWithRespect} for a formal description.
We call the resulting secret-sharing scheme the {\em outer} scheme.

To prevent corrupt parties from cheating, by say, sending false shares
and causing reconstruction of wrong secrets, every message that a party
should send during the  execution of the protocol is signed in the
preprocessing phase (together with the appropriate round number and  with
the party's index). In addition, the dealer sends a verification key to
each of  the parties.  To conclude, the off-line dealer gives each party
the signed shares for the outer secret  sharing scheme together with the
verification key.  A formal description of the functionality of the
off-line dealer, called Functionality $\MultiShareGenDomain$, is given in
\figref{MultiShareGenDomain}.

\myprotocolh{}{The initialization
functionality $\MultiShareGenDomain_\numRounds$.}{fig:MultiShareGenDomain}{    
\begin{enumerate}
\item[]{\hspace*{-0.63cm} \bf Joint input:} 
The security parameter $1^\secParam$, the number of rounds in the protocol $\numRounds=\numRounds(\secParam)$,
                a bound $\badNum$ on the number of corrupted parties, and the set of indices of aborted parties $\abortedAfterPre$.
                
\item[]{\hspace*{-0.63cm} \bf Private input:}
                Each party $p_j$, where $j\notin \abortedAfterPre$, has an input $\partyInput{j}\in \partyDomain{}$.

\item[]{\hspace*{-0.63cm} \bf Computing default values and signing keys} 
        \item \label{stp:defaultInput} For every $j\in \abortedAfterPre$, select $\partyInput{j}$ 
        with uniform distribution from $\partyDomain{}$.
        \item\label{stp:selectIStar} Select $\istar \in \range{\numRounds}$ with uniform distribution and compute $\outValue \gets  f_\secParam(\partyInput{1},\ldots,\partyInput{\partNum})$.
                                        
                                
\item \label{stp:beforeIStar}For each $1\le i < \istar$, for each $J \subseteq
\range{\partNum}\setminus \abortedAfterPre$ s.t. $\goodNum \leq \size{J} \leq \badNum$,
\begin{enumerate}
        \item For each $j\in J$, set $\randomDomain{j}{i} = \partyInput{j}$.
        \item For each $j\not\in J$, select uniformly at random $\randomDomain{j}{i} \in \partyDomain{}$.
        \item Set $\valueMultiParty{i}{J} \gets
f_\secParam(\randomDomain{1}{i},\ldots,\randomDomain{\partNum}{i})$.
\end{enumerate}

  \item \label{stp:chooseValues2} For each $\istar\le i \le \numRounds$ and 
  for each $J \subseteq \range{\partNum}\setminus \abortedAfterPre$ s.t. $\goodNum \leq \size{J} \leq \badNum$,
                                        set $\valueMultiParty{i}{J} = \outValue$.

   \item \label{stp:produceKeys1} Compute $\vect{K_{\rm sign},K_{\rm ver}} \leftarrow \operatorname{Gen}(1^n)$. 
\item[] {\hspace*{-0.63cm} \bf Computing signed shares of the inner secret-sharing scheme}

\item
\label{stp:inner}
  For each $i \in \set{1, \ldots, \numRounds}$ and 
  for each $J \subseteq \range{\partNum} \setminus \abortedAfterPre$ s.t. $\goodNum \leq \size{J} \leq \badNum$,
  \begin{enumerate}
  \item Create shares of $\valueMultiParty{i}{J}$ in a
   $|J|$-out-of-$|J|$  secret-sharing scheme for the parties in
   $\superSet{J}$.  For each party $p_j \in \superSet{J}$, let
   $\valueInnerSecretSharingShare{i}{J}{j}$ be its share of
   $\valueMultiParty{i}{J}$.
   \item Sign each share $\valueInnerSecretSharingShare{i}{J}{j}$: 
   compute $\valueInnerSecretSharingShareSigned{i}{J}{j}\leftarrow
   (\valueInnerSecretSharingShare{i}{J}{j}, i, J, j ,
   \operatorname{Sign}((\valueInnerSecretSharingShare{i}{J}{j}, i, J, j) ,K_{\rm sign})).$
 \end{enumerate}
\item[] {\hspace*{-0.63cm} \bf Computing shares of the outer secret-sharing scheme}
\item \label{stp:outer} For each $i \in \range{\numRounds}$, 
for each $J \subseteq \range{\partNum} \setminus \abortedAfterPre$ s.t. $\goodNum \leq \size{J} \leq \badNum$,
and each $j \in J$,
share $\valueInnerSecretSharingShareSigned{i}{J}{j}$ 
using a $(\badNum + 1)$-out-of-$\partNum$ secret-sharing scheme 
with respect to $p_{j}$ as defined in \constructref{shareWithRespect}:
compute 
one masking share $\masking{\valueInnerSecretSharingShareSigned{i}{J}{j}}{j}$ and $\partNum - 1$ complement shares
                   $\vect{\complParty{\valueInnerSecretSharingShareSigned{i}{J}{j}}{1},\ldots,\complParty{\valueInnerSecretSharingShareSigned{i}{J}{j}}{j-1},\complParty{\valueInnerSecretSharingShareSigned{i}{J}{j}}{j+1},\ldots,\complParty{\valueInnerSecretSharingShareSigned{i}{J}{j}}{\partNum}}$.
\item[] {\hspace*{-0.63cm} \bf Signing the messages of all parties}
\item \label{stp:preperMSG} For every $1\leq q \leq \partNum$, compute the message $m_{q,i}$ that $p_q\in P$ broadcasts in round $i$
  by concatenating (1) $q$, (2) $i$, 
  and (3) the complement shares $\complParty{\valueInnerSecretSharingShareSigned{i}{J}{j}}{q}$ produced in \stepref{outer} for $p_q$ (for all $J \subseteq \range{\partNum}\setminus \abortedAfterPre$ s.t. $\goodNum \leq \size{J} \leq \badNum$ and all $j\neq q$ s.t. $j \in J$), and
  compute $M_{q,i} \leftarrow 
  \vect{m_{q,i},\operatorname{Sign}(m_{q,i},K_{\rm sign})}$.
\item[]{\hspace*{-0.63cm} \bf Outputs:}  Each party $p_j$ such that $j\notin \abortedAfterPre$ receives 
\begin{itemize}
   \item The verification key $K_{\rm ver}$.
   \item The messages $M_{j,1}, \ldots, M_{j,\numRounds}$ that $p_j$ 
           broadcasts during the  protocol.
   \item $p_j$'s private masks $\masking{\valueInnerSecretSharingShareSigned{i}{J}{j}}{j}$
produced in \stepref{outer}, for each $1\leq i \leq \numRounds$ and each $J \subseteq \range{\partNum} \setminus \abortedAfterPre$ s.t. $\goodNum \leq \size{J} \leq \badNum$ and $j \in J$.
 \end{itemize}
\end{enumerate}
}

The protocol with the off-line dealer proceeds in rounds.  In round $i$ of
the protocol all parties broadcast their (signed) shares  in the outer
($\badNum+1$)-out-of-$\partNum$ secret-sharing scheme. Thereafter, each
party can  unmask the message it receives (with its share in the
appropriate $2$-out-of-$2$  secret-sharing scheme) to obtain its shares in
the $|J|$-out-of-$|J|$ inner secret-sharing of the values $\valueMultiParty{i}{J}$ (for the
appropriate sets $\superSet{J}$'s to which the party belongs). If a party stops
broadcasting messages  or broadcasts improperly signs messages, then all
other parties consider it as  aborted. If $\partNum-\badNum $ or more parties abort, the
remaining parties reconstruct the value of the set
that contains all of them, i.e., $\valueMultiParty{i-1}{J}$. In the special case
of premature termination already in the first round, the remaining active parties
engage in a fully secure protocol (with honest majority) to compute $f_\secParam$.

The use of the outer secret-sharing scheme with threshold $\badNum+1$ plays
a crucial role in eliminating the on-line dealer. On the one hand, it
guarantees that an adversary, corrupting at most $\badNum$ parties, cannot
reconstruct the shares of round $i$  before round $i$.  On the other hand,
at least $\partNum-\badNum$ parties must abort to prevent the reconstruction
of the outer secret-sharing scheme (this is why we cannot proceed after $\partNum-\badNum$
parties aborted).
Furthermore, since $\badNum \leq 2\partNum/3$, when at least $\goodNum$
corrupt parties aborted, there is an honest majority.
To see this, assume that at least $\goodNum$ corrupt parties aborted.
Thus, at most $\badNum - (\goodNum) = 2\badNum - \partNum$ corrupt parties are active.
There are $\goodNum$ honest parties (which are obviously active), therefore, as
$2\badNum - \partNum < \goodNum$ (since $\badNum< 2\partNum / 3$), an honest majority is achieved when $\goodNum$ parties abort.
In this case we can execute a protocol with full security for the reconstruction.

Finally, we replace the off-line dealer by using a secure-with-abort and
cheat-detection protocol computing the functionality computed by the
dealer, that is, Functionality $\MultiShareGenDomain_\numRounds$. Obtaining
the outputs of this computation, an adversary is unable to infer any
information regarding the input of honest parties or the output of the
protocol (since it gets $\badNum$ shares of a $(\badNum+1)$-out-of-$\partNum$ secret-sharing scheme).
The adversary, however, can prevent the execution, at the price
of at least one corrupt party being detected cheating by all other parties. In
such an event, the remaining  parties will start over without the detected cheating party. This goes on either
until the protocol succeeds or there is an honest majority and a fully secure
protocol computing $f_\secParam$ is executed.  

A formal description of the protocol
appears in \figref{MultiPartyMPC}.  The reconstruction functionality used
in this protocol (when at least $\partNum-\badNum$ parties aborted) appears
in \figref{Reconstruction}.
The details of how to construct
a protocol secure-with-abort and cheat-detection with $O(1)$ rounds are given in~\cite{BOO10}.

\myprotocolh{}
{The $\partNum$-party protocol $\MPC_\numRounds$ for computing $\F$.}
{fig:MultiPartyMPC}{
\begin{enumerate}
\item[]{\hspace*{-0.63cm} \bf Inputs:} 
        Each party $p_j$ holds the private input $\initialInput{j} \in \partyDomain{}$ and the joint input:     
  the security parameter $1^\secParam$, the number of rounds in the protocol $\numRounds=\numRounds(\secParam)$,
                and a bound $\badNum$ on the number of corrupted parties.
\item[]\hspace{-0.63cm} {\bf Preliminary phase:}
\item $\abortedAfterPre = \emptyset$
\item \label{stp:executeShareGenWithAbortManyTimes} If $\size{\abortedAfterPre} < \goodNum$, 
\begin{enumerate}
        \item \label{stp:executeShareGenWithAbort} The parties in $\set{p_j : j\in \range{m} \setminus \abortedAfterPre}$ execute a secure-with-abort and cheat-detection  protocol computing Functionality  $\MultiShareGenDomain_\numRounds$. Each honest party $p_j$ inputs $\initialInput{j}$ as its input for the functionality.
        \item \label{stp:abortShareGenWithAbort}
        If a party $p_j$ aborts, that is, the output of the honest parties
        is $\abort{j}$, then, set $\abortedAfterPre = \abortedAfterPre \cup \set{j}$,
        chose $\partyInput{j}$ uniformly at random from $\partyInput{j}$, and goto \stepref{executeShareGenWithAbortManyTimes}.
        \item   Else (no party has aborted), denote $\abortedI =\abortedAfterPre$ and proceed to the first round.
                                               
\end{enumerate}

\item \label{stp:fairMPC}   Otherwise ($\size{\abortedAfterPre} \ge \goodNum$), the premature termination is executed with $i=1$.
\item[]\hspace{-0.63cm} {\bf  In each round $i=1,\ldots, \numRounds$ do:}
\item \label{stp:broadcastMessage} Each party $p_j$ broadcasts $M_{j,i}$ (containing
                its shares in the outer secret-sharing scheme).

\item \label{stp:broadcastBadMessage} For every $p_j$ s.t. $\operatorname{Ver}(M_{j,i}, K_{\rm ver})=0$ or if $p_j$ broadcasts 
an invalid or no message, then all parties mark $p_j$ as inactive, i.e., set 
$\abortedI  = \abortedI  \cup \set{j}$.
If $\size{\abortedI } \geq \goodNum$, premature termination is executed.
\item[]\hspace{-0.63cm} {\bf  Premature termination step}
\item \label{stp:prematureTerminationFirstRound} If $i=1$, 
                        the active parties use a multiparty secure protocol (with full security) 
        to compute $f_\secParam$: Each honest party inputs $\initialInput{j}$ and the input of each inactive party is chosen uniformly at random from $\partyDomain{}$. The active parties output the result, and halt. 
\item \label{stp:prematureTerminationNotFirstRound}
Otherwise,
\begin{enumerate}
\item \label{stp:sendMSGPrematureTermination} Each  party $p_j$ reconstructs $\valueInnerSecretSharingShareSigned{i-1}{J}{j}$, the signed share of the inner secret-sharing scheme
  produced in \stepref{inner} of Functionality $\MultiShareGenDomain_\numRounds$,  
  for each $J \subseteq \range{\partNum}\setminus \abortedAfterPre$ s.t. $\goodNum \leq \size{J} \leq \badNum$ and $j\in J$.
\item \label{stp:executeReconstruction} The active parties execute a secure multiparty  protocol with an
honest majority to compute Functionality $\Reconstruction$, where the input of each
party $p_j$ is $\valueInnerSecretSharingShareSigned{i-1}{J}{j}$
for every $J \subseteq \range{\partNum}\setminus \abortedAfterPre$ s.t. $\goodNum \leq \size{J} \leq \badNum$
and $j \in J$.
\item \label{stp:outputReconstruction} The active parties output the output of this protocol, and halt.
\end{enumerate}
\item[]\hspace{-0.63cm} {\bf  At the end of round $\numRounds$:}
\item \label{stp:lastRoundBroadcast} Each active party $p_j$ broadcasts the signed shares
  $\valueInnerSecretSharingShareSigned{\numRounds}{J}{j}$ for each $J$ such that  $j \in J$.
\item \label{stp:lastRoundOutput} Let $J\subseteq \range{\partNum} \setminus \abortedI $ be the lexicographical first set such that all the parties in $\superSet{J}$ broadcast properly signed shares 
                                $\valueInnerSecretSharingShareSigned{\numRounds}{J}{j}$.
        Each active party reconstructs the value $\valueMultiParty{\numRounds}{J}$, outputs $\valueMultiParty{\numRounds}{J}$, and halts.
\end{enumerate}
}

\myprotocolh{}{Functionality $\Reconstruction$
for reconstructing the output in the premature termination step.}{fig:Reconstruction}{
\begin{description}
\item[\bf Joint Input:] The round number $i$, the indices of inactive parties $\abortedI $, a bound $\badNum$ on the number of corrupted parties,
 and the verification key, $K_{\rm ver}$.
\item[\bf Private Input of $p_{j}$:]
   A set of signed shares $\valueInnerSecretSharingShareSigned{i-1}{J}{j}$  
   for each $J \subseteq \range{\partNum}\setminus \abortedAfterPre$ s.t. $\goodNum \leq \size{J} \leq \badNum$ and $j\in J$.
\item[\bf Computation:] ~ 
\begin{enumerate}
  \item For each $p_{j}$, if $p_{j}$'s input is not appropriately signed or malformed,
                        then $\abortedI  = \abortedI  \cup \set{j}$.
  \item Set $J = \range{\partNum} \setminus \abortedI $.
\item Reconstruct $\IOLvalue_J^{i-1}$ from the shares of all the parties in $\superSet{J}$. 
%
\end{enumerate}

\item[\bf Outputs:]
 All parties receive the value $\IOLvalue_J^{i-1}$ (as their output).
\end{description}
}

\paragraph{Comparison with the multiparty coin-tossing protocol
  of~\cite{BOO10}.}  Our protocol combines ideas from the protocols
of~\cite{GK10,BOO10}. However, there are some important differences between
our protocol and the protocol of~\cite{BOO10}. In the coin-tossing protocol
of~\cite{BOO10}, the bits $\valueMultiParty{i}{J}$ are shared using a
threshold scheme where the threshold is smaller than the size of the set
$\superSet{J}$. This means that a proper subset of $\superSet{J}$ containing
corrupt parties can reconstruct $\valueMultiParty{i}{J}$. In coin-tossing
this is not a problem since there are no inputs. However, when computing
functionalities with inputs, such $\valueMultiParty{i}{J}$ might reveal
information on the inputs of honest parties in $\superSet{J}$, and we share
$\valueMultiParty{i}{J}$ with threshold $\size{\superSet{J}}$. As a result, we use
more sets $\superSet{J}$ than in~\cite{BOO10} and the bias of the protocol
is increased (put differently, to keep the same security, we need to increase the number of rounds in the protocol). For example, the protocol of~\cite{BOO10} has small bias when there are polynomially many parties and $\badNum=\partNum/2$. Our protocol is efficient only when there are constant number of parties. As explained
in \secref{Impossibility}, this difference is inherent as a protocol
for general functionalities with polynomially many parties and $\badNum=\partNum/2$ cannot have a small bias.

\subsection{A $1/p$-Secure Protocol for Polynomial Range}
\label{sec:polynomialRange}
Using an idea of~\cite{GK10}, we modify our protocol such that it will
have a small bias when the size of the range of the functionality
$\F$ is polynomially bounded (even if $\F$ is randomized
and has a big domain of inputs). The only modification is the way that each
$\valueMultiParty{i}{J}$ is chosen prior to round $\istar$: with
probability $1/(2p)$ we choose $\valueMultiParty{i}{J}$ as a random value
in the range of $f_\secParam$ and with probability $1-1/(2p)$ we choose it as
in \figref{MultiShareGenDomain}. 
Formally, in the model with the dealer, in the preprocessing phase of $\MPCWithDealer_\numRounds$
described in \figref{MPCWithDealer}, we replace \stepref{beforeIStar1} with the
following step:
\begin{itemize} 
        \item For each $i \in \set{1, \ldots, \istar-1}$ and for each $J \subseteq \range{\partNum} \setminus \abortedAfterPre$ s.t. $\goodNum \leq \size{J} \leq \badNum$,
         \begin{itemize}  
                                                \item with probability $1/(2p)$, select uniformly at random $z_J^i \in \partyRange{}$ and set $\valueMultiParty{i}{J} = z_J^i$.
                                                \item   with the remaining probability $1-1/(2p)$,
                                                                                \begin{enumerate}
                                                                                        \item For every $j\not \in J$ select uniformly at random $\randomDomain{j}{i} \in \partyDomain{}$ and for each $j\in J$, set $\randomDomain{j}{i} = \partyInput{j}$.
                                                                                        \item Compute $\valueMultiParty{i}{J} \gets 
                                                                                        f_\secParam(\randomDomain{1}{i},\ldots,\randomDomain{\partNum}{i})$.
                                                                                \end{enumerate}
                                                                              \end{itemize}
        
\end{itemize}
Similarly, in the protocol without the dealer, Protocol $\MPC_\numRounds$, we replace
\stepref{beforeIStar} in $\MultiShareGenDomain_\numRounds$
(described in \figref{MultiShareGenDomain}) with the above step.  Denote the resulting protocols
with and without the dealer models by $\MPCWithDealerRange$ and
$\MPCRange_\numRounds$, respectively.

The idea why this change improves the protocol is that now the probability
that all values held by the adversary are equal prior to round $\istar$
is bigger, thus, the probability that the adversary guesses $\istar$ is
smaller. This modification, however, can cause the honest parties to output
a value that is not possible given their inputs, and, in general, we cannot
simulate the case (which happens with probability $1/(2p)$) when the output is chosen with uniform distribution
from the range.

\section{Impossibility of $1/p$-secure Computation with Non-Constant Number
  of Parties}
\label{sec:Impossibility}

For deterministic functions, our protocol is efficient when the number of
parties $\partNum$ is constant and the size of the domain or range is polynomial (in the
security parameter $\secParam$) or when the number of parties is 
$O(\log \log \secParam)$ and the size of the domain is constant.
We next show that, in
general, there is no efficient protocol when the number of parties is
$\partNum(\secParam)=\omega(1)$ and the size of the domain is polynomial
and when   $\partNum(\secParam)=\omega(\log \secParam)$ and the size of the
domain of each party is 2. This is done using the  following impossibility
result of Gordon and Katz~\cite{GK10}.
\begin{theorem}[\cite{GK10}]
\label{thm:GK}
  For every $\ell(\secParam)=\omega(\log \secParam)$, there exists a
  deterministic 2-party functionality  $\F$ with domain and range
  $\set{0,1}^{\ell(\secParam)}$ that cannot be $1/p$-securely
  computed for $p\geq 2+1/\poly(\secParam)$.
\end{theorem}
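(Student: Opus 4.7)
The plan is to exhibit an explicit deterministic function and give a Cleve-style attack that shows any purported protocol must deviate from the ideal execution by essentially $1/2$ in statistical distance. Concretely, I would take $f:\set{0,1}^{\ell(\secParam)}\times\set{0,1}^{\ell(\secParam)}\to\set{0,1}^{\ell(\secParam)}$ to be a function with the property that, for any fixed $y$, the distribution of $f(x,y)$ for $x$ chosen uniformly in $\set{0,1}^{\ell}$ is (nearly) uniform on $\set{0,1}^{\ell}$ --- for example $f(x,y)=x\xor y$ or a pseudorandom variant. Since $\ell=\omega(\log \secParam)$, predicting $f(x,y)$ from $y$ alone succeeds with only negligible probability.

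The attack is against $P_1$: an adversary $\A$ runs $\Pi$ honestly and, at an adaptively chosen round $i$, aborts. The honest $P_2$ must then output some value $\outValue$ computed from its view $V_i$ up to round $i$. In the ideal world, the simulator commits to some $x' \in \set{0,1}^{\ell}$ before ever interacting with $P_2$, and $P_2$ outputs $f(x',y)$ --- a value that is a deterministic function of $(x',y)$. The core of the argument is to identify a ``critical'' round $\istar$ such that, up through round $\istar-1$, the view of $P_2$ depends on $x$ only negligibly, and hence $P_2$'s best guess of $f(x,y)$ conditioned on $V_{\istar-1}$ agrees with the true output only with probability $2^{-\ell}+\operatorname{negl}(\secParam)$.

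To locate $\istar$ I would use a martingale/Cleve-style round-by-round argument: track the quantity $q_i = \Pr[\text{$P_2$'s guess of }f(x,y)\text{ from }V_i\text{ is correct}]$ as a function of $i$. Since $q_0$ is negligible (at most $2^{-\ell}$) and $q_{\numRounds}$ is $1-\operatorname{negl}(\secParam)$ (by correctness of $\Pi$), there must be a round where $q_i$ jumps by a non-trivial amount; by a hybrid averaging over rounds, the adversary can cause $P_2$'s output distribution to have statistical distance at least $1/2-\operatorname{negl}(\secParam)$ from the ideal distribution on $f(x,y)$. Matching this to $1/p$-indistinguishability yields $1/p \geq 1/2 - 1/\poly(\secParam)$, equivalently the claimed $p < 2+1/\poly(\secParam)$.

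The main obstacle is to carry out the critical-round argument rigorously when the protocol is allowed to interleave information flow arbitrarily --- in particular, when both $P_1$ and $P_2$ gradually accumulate information about the output. This is handled by the standard Cleve technique: if $P_1$'s prediction accuracy ever outpaces $P_2$'s by more than a negligible amount at some round, the attack above applies; otherwise the symmetric attack against $P_2$ applies. Because information must flow along protocol messages, one direction of imbalance must occur, producing the claimed constant bias.
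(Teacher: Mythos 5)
This theorem is not proved in the paper at all --- it is imported verbatim from~\cite{GK10} --- so your sketch has to stand on its own, and as it stands it has two genuine gaps. First, the choice of function is not justified and is very likely wrong. For $f(x,y)=x\xor y$ the ideal-world simulator has complete control over the honest party's output: for a fixed $y$ the map $x'\mapsto f(x',y)$ is a bijection, so by choosing its input $x'$ (possibly adaptively, and correlated with the view it fabricates) the simulator can make the honest party output \emph{any} value it likes; moreover it is not true that it must ``commit to $x'$ before interacting with $P_2$'' --- it never interacts with $P_2$, it learns the true output the moment it queries, and it may even abort so that a uniformly random input is substituted. In the extreme case where the auxiliary input reveals $y$, the simulator knows both inputs, can emulate the entire real execution (including your aborting adversary) in its head, obtain the honest party's real-world backup output $z$, and then send $x'=z\xor y$ so that the ideal honest party outputs exactly $z$ --- a perfect simulation of precisely the attack you describe. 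So the event ``$P_2$'s output disagrees with $f(x,y)$ after an early abort'' is not unsimulatable, and your argument never identifies an event on the \emph{joint} distribution of the adversary's output and the honest party's output that no simulator can match. This is exactly why Gordon and Katz construct a special function with verification-type structure rather than using XOR or a pseudorandom variant; the impossibility must defeat every simulator strategy, not a naive one.

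Second, the quantitative step is missing. Tracking $q_i=\Pr[\mbox{$P_2$ guesses $f(x,y)$ from $V_i$}]$ with $q_0\approx 2^{-\ell}$ and $q_\numRounds\approx 1$ only guarantees some round with a jump of order $1/\numRounds$, and a Cleve-style averaging over rounds therefore yields a bias of order $1/\numRounds$, not the constant $1/2$ needed to rule out $p\geq 2+1/\poly(\secParam)$ (for coin tossing this is exactly why Cleve's bound is $1/\numRounds$). You also need the adversary to \emph{locate} the jump online; for XOR without auxiliary information every value it sees is uniform, so it cannot detect the critical round, and with auxiliary information the simulation issue above kicks in. The appeal to ``if $P_1$'s accuracy outpaces $P_2$'s, attack one side, else the other'' is the heart of the matter, but nothing in the sketch converts that dichotomy into a statistical distance of $1/2-1/\poly(\secParam)$ against all simulators. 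As written, the proposal reproduces the intuition behind such impossibility results but does not constitute a proof of the cited theorem.
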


We next state and prove our impossibility results.

\begin{theorem}
\label{thm:impossibility}
   For every $\partNum(\secParam)=\omega(\log \secParam)$, there exists a
   deterministic $\partNum(\secParam)$-party functionality $\F'$ with domain
   $\set{0,1}$ that cannot be $1/p$-securely computed for $p\geq
   2+1/\poly(\secParam)$ without an honest majority.
\end{theorem}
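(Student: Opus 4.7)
The plan is a reduction from the two-party impossibility of Gordon and Katz, \thmref{GK}. Set $\ell(\secParam) = \lfloor \partNum(\secParam)/2 \rfloor$, so $\ell(\secParam) = \omega(\log \secParam)$. Let $\F$ be the two-party deterministic functionality with domain and range $\set{0,1}^{\ell(\secParam)}$ guaranteed by \thmref{GK} not to admit any $1/p$-secure protocol for $p \geq 2 + 1/\poly(\secParam)$. Partition the $\partNum$ parties into groups $A$ and $B$ of sizes $\lceil \partNum/2 \rceil$ and $\lfloor \partNum/2 \rfloor$; since $\size{A},\size{B}\geq \ell$, the first $\ell$ parties in each group can encode an $\ell$-bit string one bit at a time. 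Define the $\partNum$-party deterministic functionality $\F'$ with per-party binary domain by $\F'(x_1,\ldots,x_\partNum) = \F(a,b)$, where $a$ is read from the inputs of the first $\ell$ parties of $A$ and $b$ from those of $B$ (all remaining input bits are ignored); the output is delivered to all parties.

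Suppose toward contradiction that $\F'$ admits an efficient $1/p$-secure $\partNum$-party protocol $\Pi'$ tolerating $t = \lceil \partNum/2 \rceil$ corruptions (that is, without an honest majority). I would construct a two-party protocol $\Pi$ for $\F$ in which Party~1, on input $a \in \set{0,1}^\ell$, internally simulates the $\size{A}$ parties of group $A$ (encoding $a$ into the first $\ell$ of them and assigning $0$ to the rest), and Party~2 does likewise for $B$; the broadcast channel of $\Pi'$ is emulated over the two-party point-to-point channel. A real-world adversary corrupting Party~1 in $\Pi$ then corresponds to a real-world adversary corrupting all of $A$ in $\Pi'$, at most $t$ parties, and symmetrically for Party~2, so $\Pi$ only needs to tolerate one corruption out of two.

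Given a two-party real adversary $\A$ corrupting, say, Party~1, lift it to a $\partNum$-party real adversary $\A'$ corrupting $A$; by hypothesis there is an ideal-model simulator $\S'$ for $\Pi'$. Define a two-party ideal simulator $\S$ by running $\S'$ internally, collecting the bits that $\S'$ submits to the $\F'$ trusted party on behalf of each party in $A$ (treating an internal $\abort{j}$ by $\S'$ as a uniformly random bit for $p_j$, matching the $\F'$ ideal's convention for aborts), assembling the first $\ell$ of these into a string $a' \in \set{0,1}^\ell$, and sending $a'$ to the $\F$ trusted party; the returned value $\F(a',b)$ is fed back to $\S'$ as the $\F'$ output. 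The main subtlety, and essentially the only nontrivial step, is to verify that the resulting two-party ideal execution is distributed identically (over honest-party outputs and the adversary's view) to the $\partNum$-party ideal execution of $\F'$; this holds because both reduce to evaluating $\F(a',b)$ where $a'$ is composed of adversarially chosen and uniformly random bits and $b$ comes from the honest party's input. Consequently the $1/p$-indistinguishability of $\Pi'$'s real and ideal ensembles carries over to $\Pi$, contradicting \thmref{GK}.
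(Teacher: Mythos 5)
Your proposal is correct and follows essentially the same route as the paper's proof: reduce to the two-party impossibility of \thmref{GK} by splitting the $\partNum$ parties into two halves, letting each of the two parties simulate one half (so corrupting one party corresponds to corrupting at most half of the parties in $\Pi'$), and transferring the $1/p$-security guarantee back to a two-party protocol for $\F$, yielding a contradiction. Your write-up merely spells out details the paper leaves implicit (the floor/ceiling padding and the explicit construction of the two-party simulator from $\S'$, including the handling of aborted parties' inputs), but the underlying argument is the same.
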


\begin{proof}
Let $\ell(\secParam) = \partNum(\secParam)/2$ (for simplicity, assume $\partNum(\secParam)$ is even).
Let $\F=\set{f_\secParam}_{\secParam\in \NN}$ be the functionality
guaranteed in \thmref{GK} for $\ell(\secParam)$.
Define an $\partNum(\secParam)$-party deterministic functionality
$\F'=\set{f'_\secParam}_{\secParam\in \NN}$, where in $f'_\secParam$ party
$p_j$ gets the $j$th bit of the inputs of $f_\secParam$ and the outputs of
$f_\secParam$ and $f'_\secParam$ are equal 
Assume that $\F'$ can be 
$1/p$-securely computed by a protocol $\Pi'$ assuming that
$\badNum(\secParam)=\partNum(\secParam)/2$ parties can be corrupted. This implies a $1/p$-secure protocol $\Pi$ for $\F$ with two parties, where the first party simulates the first $\badNum(\secParam)$ parties in $\Pi'$ and the second party simulates the last $\badNum(\secParam)$ parties. The $1/p$-security of $\Pi$ is implied by the fact that any adversary $\A$ for the protocol $\Pi$ can be transformed into an adversary $\A'$ for $\Pi'$ controlling
$\partNum(\secParam)/2 = \badNum(\secParam)$ parties; as $\A'$ cannot violate the $1/p$-security of $\Pi'$, the adversary $\A$ cannot violate the $1/p$-security of $\Pi$.
\end{proof}

\begin{theorem}
\label{thm:imposibility2}
  For every $\partNum(\secParam)=\omega(1)$, there exists a deterministic
  $\partNum(\secParam)$-party functionality $\F''$ with domain
  $\set{0,1}^{\log \secParam}$ that cannot be $1/p$-securely computed for
  $p\geq 2+1/\poly(\secParam)$ without an honest majority.
\end{theorem}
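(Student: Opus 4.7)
The plan is to mimic the proof of \thmref{impossibility}, but balance the parameters differently so that the per-party domain is $\set{0,1}^{\log \secParam}$ while keeping the number of parties only $\omega(1)$. The key observation is that the impossibility of \thmref{GK} only requires the total input length of each of the two simulated players to be $\omega(\log \secParam)$, and this budget can be distributed either across many parties with tiny domains (as in \thmref{impossibility}) or across few parties with larger domains (as here).

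Concretely, set $\ell(\secParam) = (\partNum(\secParam)/2)\cdot \log \secParam$; since $\partNum(\secParam) = \omega(1)$, we have $\ell(\secParam) = \omega(\log \secParam)$, so \thmref{GK} gives a deterministic $2$-party functionality $\F = \set{f_\secParam}$ with domain and range $\set{0,1}^{\ell(\secParam)}$ that cannot be $1/p$-securely computed for $p \geq 2 + 1/\poly(\secParam)$. I would then define the $\partNum(\secParam)$-party functionality $\F'' = \set{f''_\secParam}$ by partitioning the $2\ell(\secParam)$ input bits of $f_\secParam$ into $\partNum(\secParam)$ blocks of length $\log \secParam$, giving the $j$th block to party $p_j$, and letting $f''_\secParam$ output $f_\secParam$ applied to the concatenated inputs (returned to all parties).

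The reduction is then the same as in \thmref{impossibility}: assume for contradiction that there is a protocol $\Pi'$ that $1/p$-securely computes $\F''$ tolerating $\badNum(\secParam) = \partNum(\secParam)/2$ corruptions (no honest majority). Build a $2$-party protocol $\Pi$ for $\F$ in which the first party simulates $p_1,\dots,p_{\partNum/2}$ (feeding in the first half of its input, bit by bit, into the corresponding blocks) and the second party simulates $p_{\partNum/2+1},\dots,p_{\partNum}$. Any non-uniform PPT adversary $\A$ corrupting one party in $\Pi$ corresponds to an adversary $\A'$ corrupting exactly $\partNum/2 = \badNum$ parties in $\Pi'$; an ideal-world simulator for $\A'$ against $\F''$ translates directly into an ideal-world simulator for $\A$ against $\F$ (since $\F$ and $\F''$ compute the same function and the honest parties' outputs are identical). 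Hence $\Pi$ would be a $1/p$-secure $2$-party protocol for $\F$, contradicting \thmref{GK}.

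There is no real obstacle; the only point that needs a sentence of justification is that $\ell(\secParam) = \omega(\log \secParam)$ holds precisely because $\partNum(\secParam) = \omega(1)$, which is exactly the hypothesis of the theorem, and that $\partNum(\secParam)/2$ being non-integer for odd $\partNum(\secParam)$ can be handled by rounding (e.g., use $\lfloor \partNum/2\rfloor$ parties on one side and $\lceil \partNum/2\rceil$ on the other; $\ell(\secParam) = \lfloor \partNum/2 \rfloor \log \secParam$ is still $\omega(\log \secParam)$).
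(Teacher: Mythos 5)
Your proposal is correct and is essentially identical to the paper's proof: the paper also sets $\ell(\secParam)=0.5\,\partNum(\secParam)\log\secParam$, takes the functionality guaranteed by \thmref{GK}, splits the $2\ell(\secParam)$ input bits into $\partNum(\secParam)$ blocks of length $\log\secParam$, and reduces via the same two-party simulation argument used in \thmref{impossibility}. Your added remarks about why $\ell(\secParam)=\omega(\log\secParam)$ and about rounding for odd $\partNum(\secParam)$ are fine, minor clarifications of the same argument.
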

\begin{proof}
  Let $\ell(\partNum)=0.5\partNum(\secParam)\log \secParam$ and let
$\F=\set{f_\secParam}_{\secParam\in \NN}$ be the functionality guaranteed
in \thmref{GK} for $\ell(\partNum)$.  We divide the $2\ell(\secParam)$ bits
of the inputs of $f_\secParam$ into $\partNum(\secParam)$ blocks of length $\log
\secParam$.  Define an $\partNum(\secParam)$-party deterministic functionality
$\F''=\set{f''_\secParam}_{\secParam\in \NN}$, where in $f''_\secParam$
party $p_j$ gets the $j$th block of the inputs of $f_\secParam$ and the outputs
of $f_\secParam$ and $f''_\secParam$ are equal.  As in the proof of
\thmref{impossibility}, a $1/p$-secure protocol for $\F''$ implies a
$1/p$-secure protocol for $\F$ contradicting \thmref{GK}.
\end{proof}

\medskip

The above impossibility results should be contrasted with the
coin-tossing protocol of~\cite{BOO10} which is an efficient $1/p$-secure
protocol even when $\partNum$ is polynomial in the security parameter and
the number of bad parties is $\partNum(\secParam)/2+O(1)$.  Notice that in
both our impossibility results the size of the range is
super-polynomial (as we consider the model where all parties get the same output).
It is open if there is an efficient $1/p$-secure protocol
when the number of parties is not constant and the size of both the domain
and range is polynomial.

\bibliographystyle{plain}
\bibliography{bib-read-only}

\appendix

\section{Security with Abort and Cheat Detection}
\label{app:securityWithBlaming}

We next present a definition of secure multiparty computation that is more 
stringent than standard definitions of secure computation with abort. This 
definition extends the definition for secure computation as given by Aumann 
and Lindell~\cite{AL07}. Roughly speaking, the definition requires that one of two events is 
possible:
(1) The protocol terminates normally, and \emph{all} parties receive 
their outputs, or
(2) Corrupted parties deviate from the prescribed protocol; in this case the adversary obtains the outputs of the corrupted parties (but nothing 
else), and all honest parties are given an
identity of one party that has aborted. 
The formal definition uses the real vs.~ideal paradigm as discussed in \secref{realIdeal}. 
We next describe the appropriate ideal model.

\paragraph{Execution in the ideal model.} 
Let 
$\badI \subseteq \range{\partNum}$ denote the set of indices of corrupted parties
controlled by an adversary $\A$. 
The adversary $\A$ receives an auxiliary input denoted $\aux$.
An ideal execution proceeds as follows:
\begin{description}
        \item{\textbf{Send inputs to trusted party:}} The honest parties
                send their inputs to the trusted party.  
                The corrupted parties may either send their received input, or send some other input of the same
                length (i.e., $\partyInput{j} \in \partyDomain{}$) to the trusted party,
                or abort (by sending a special $\abort{j}$ message).            
                Denote by
                $\partyInput{1},\ldots,\partyInput{\partNum}$ the
                inputs received by the trusted party. If the trusted party receives
                an $\abort{j}$ message, then it sends $\abort{j}$ to all 
                honest parties and terminates (if it received $\abort{j}$ from more 
                than one $j$, then it uses the minimal such $j$).  

        \item{\textbf{Trusted party sends outputs to adversary:}} The
                trusted party computes $\outValue \gets
                f_\secParam(\partyInput{1},\ldots,\partyInput{\partNum})$
                and sends the output $\outValue$ to the adversary.
                
        \item{\textbf{Adversary instructs the trusted party to continue or
                halt:}} $\A$ sends either a $\continueMSG$ message or $\abort{j}$
                to the trusted party  for some corrupt party $p_j$, i.e., $j\in\badI$.
                If it sends a $\continueMSG$ message, the trusted party sends $\outValue$ to 
                all honest parties. Otherwise, if the adversary sends
                $\abort{j}$, then the trusted party sends $\abort{j}$  to all
                honest parties.
        \item{\textbf{Outputs:}} An honest party always outputs the value
                $\outValue$ it obtained from the trusted party. The
                corrupted parties output nothing.  The adversary $\A$
                outputs any (probabilistic polynomial-time  computable)
                function of the auxiliary input $\aux$, 
                the inputs of the corrupt parties,
                and the value $\outValue$ obtained from the trusted party.
\end{description}

We let $\Ideal_{\F,\S(\aux)}^{\rm{CD}}(\vecInput,1^\secParam)$ and
$\Real_{\Pi,A(\aux)}(\vecInput,1^\secParam)$ be defined as in
\secref{realIdeal} (where in this case $\Ideal_{\F,\S(\aux)}^{\rm{CD}}(\vecInput,1^\secParam)$  refers to the above
execution with cheat-detection of $\F$).  This ideal model is different
from that of~\cite{Gol04} in  that in the case of an ``abort'', the honest
parties get output $\abort{j}$  and not a $\bot$ symbol. This means that the
honest parties \emph{know}  an identity of a corrupted party that causes
the abort.  This cheat-detection is achieved by most multiparty protocols,
including  that of~\cite{GMW87}, but not all (e.g., the protocol
of~\cite{GL02} does not  meet this requirement).  Using this notation we
define secure computation with abort and cheat-detection.

\begin{definition}[security-with-abort and cheat-detection]\label{def:cheatDetection} 
Let $\F$ and $\Pi$ be as in \defref{1Overp-security}. A protocol $\Pi$ is said to 
\texttt{securely compute} $\F$ against at most $\badNum(\secParam)$ corrupt parties
with abort and cheat-detection if for 
every non-uniform polynomial-time adversary $\A$ in the real model
controlling at most $\badNum(\secParam)$ parties, 
there exists a non-uniform polynomial-time adversary $\S$ in the ideal 
model controlling the same parties, such that 
$$\set{\Ideal_{\F,\S(\aux)}^{\rm{CD}}(\vecInput,1^\secParam)}_{
  \aux \in \set{0,1}^*,\vecInput \in (\partyDomain)^\partNum,\secParam\in\NN} 
\quad\comp\quad
  \set{\Real_{\Pi,A(\aux)}(\vecInput,1^\secParam)}_{\aux \in \set{0,1}^*,\vecInput \in (\partyDomain)^\partNum,\secParam\in\NN}.$$
\end{definition}

\section{Cryptographic Tools}
\label{app:cryptoTools}
\paragraph{Signature Schemes.} 
Informally, a signature on a message proves that the
message was created by its presumed sender, and its content was not altered.  A
signature scheme is  a triple
$\vect{\operatorname{Gen},\operatorname{Sign},\operatorname{Ver}}$ containing 
the key generation algorithm $\operatorname{Gen}$, which 
outputs a  pair of keys,  the signing key $K_{S}$ and the
verification key $K_{v}$,  the signing algorithm $\operatorname{Sign}$, and the verifying algorithm $\operatorname{Ver}$. We assume that it is infeasible to produce signatures without holding the signing key.
For formal definition see~\cite{Gol04}.

\paragraph{Secret Sharing Schemes.}\label{sec:secretSharing}
An $\alpha$-out-of-$\partNum$ secret-sharing scheme is a mechanism for sharing data among a set of parties such that every set of size $\alpha$ can reconstruct the secret, while any smaller  set knows nothing about the secret.
In this paper, we use two schemes: the XOR-based $\partNum$-out-of-$\partNum$ scheme 
(i.e., in this scheme $\alpha = \partNum$)
and
Shamir's $\alpha$-out-of-$\partNum$ secret-sharing scheme~\cite{Sh} 
which is used when $\alpha < \partNum$.
In both schemes, for
every $\alpha-1$ parties, the shares of these parties are uniformly
distributed and independent of the secret.
Furthermore, given such $\alpha-1$ shares and a secret $s$, one can {\em efficiently} complete them to $\partNum$ shares of the secret $s$.

In our protocols we sometimes require that a single party learns the value of a secret that is shared among all parties. Since all messages are sent over a broadcast channel, we use two layers of secret sharing to obtain the above requirements as described below.
\begin{construction}[secret sharing with respect to a certain party]\label{construct:shareWithRespect}
Let $s$ be a secret taken from some finite field $\FF$.
We share $s$ among $\partNum$ parties  
{\em with respect to} a (special) party $p_j$ in an 
$\alpha$-out-of-$\partNum$ secret-sharing scheme as follows:
\begin{enumerate}
        \item Choose shares $\vect{s^{(1)},s^{(2)}}$ of the secret $s$
                        in a two-out-of-two secret-sharing scheme
                        (that is, select $s^{(1)}\in \FF$ uniformly at random
                        and compute $s^{(2)} = s-s^{(1)}$).
              Denote these shares by $\masking{s}{j}$ and $\compl{s}$, respectively.
              
        \item Compute shares 
        $\vect{\lambda^{(1)},\ldots,\lambda^{(j-1)},\lambda^{(j+1)},\ldots,\lambda^{(\partNum)}}$ 
                                                        of the secret $\compl{s}$ in an $(\alpha-1)$-out-of-$(\partNum-1)$
                                                        Shamir's
secret-sharing scheme.                        
        For each $\ell \neq j$, denote $\complParty{s}{\ell} = \lambda^{(\ell)}$.                                
\end{enumerate}
{\bf Output:}
\begin{itemize}
        \item The share of party $p_j$ is $\masking{s}{j}$. We call this share ``$p_j$'s \emph{masking} share''.
        \item The share of each party $p_{\ell}$, where $\ell\neq j$, is $\complParty{s}{\ell}$.
                                                        We call this share ``$p_\ell$'s \emph{complement} share''.
\end{itemize}
\end{construction}
In the above scheme, we share the secret $s$ among the parties in $P$
in an $\alpha$-out-of-$\partNum$ secret-sharing scheme where only sets of size $\alpha$
that contain $p_j$ can reconstruct the secret. 
In this construction, for
every $\beta<\alpha$ parties, the shares of these parties are uniformly
distributed and independent of the secret.
Furthermore, given such $\beta<\alpha$ shares and a secret $s$, one can {\em efficiently} complete them to $\partNum$ shares of the secret $s$.
In addition, given $\beta$ shares and a secret $s$, one can {\em efficiently} select uniformly at random a
vector of shares competing the $\beta$ shares to $\partNum$ shares of $s$.

\section{Proof of $1/p$-Security of the Protocols with a Dealer}
\label{app:OnlineProof}
In this section we prove that our protocols described in \secref{MPC} that assume an trusted dealer are perfect $1/\poly$-secure implementations of the ideal functionality $\F$. We start by presenting in \appref{simulatorOnLineFromDomain} a simulator for Protocol~$\MPCWithDealer_\numRounds$. In \appref{proofOnLineSimulatorFromDomain}, we prove the correctness of the simulation by showing the the global output in the ideal-world is distributed within $1/\poly$ statistical distance from the global output in the real-world. In \appref{simulatorOnLineFromRange}, we describe the required modifications to the simulator for the protocol for $\F$ that has a polynomial-size range, and argue that the modified simulation is correct.

\subsection{The Simulator for Protocol~$\MPCWithDealer_\numRounds$}
\label{app:simulatorOnLineFromDomain}

We next present a simulator $\simDealer$ for Protocol $\MPCWithDealer_\numRounds$,
described in \figref{MPCWithDealer}.
Let $\badI$ be the set of indices of corrupted parties in the execution. 

The simulator $\simDealer$ invokes $\A$ on the set of inputs $\set{\initialInput{j} : j \in \badI}$, the security parameter $1^\secParam$, and the auxiliary input $\aux$, playing the role of the trusted dealer in the interaction with $\A$.
\begin{description}
\item [Simulating the preprocessing phase:] \quad
\begin{enumerate}
\item   
$\abortedAfterPre = \emptyset$.
\item   
The simulator $\simDealer$ sends a $\startMSG$ message to all corrupt parties.
\item   
$\simDealer$ receives a set of inputs $\set{\partyInput{j} : j \in \badI}$ that $\A$
submits to the computation of the dealer. 
If $\A$ does not submit an input on behalf of $p_j$, i.e., $\A$ sends an $\abort{j}$ message, then, the simulator $\simDealer$ notifies all corrupted parties that party $p_j$ aborted and updates $\abortedAfterPre = \abortedAfterPre \cup \set{j}$.
\item  
$\simDealer$ sets $\abortedI = \abortedAfterPre$. If $\size{\abortedI} \geq \goodNum$, 
the simulator  sets $i=1$ and proceeds to simulate the premature termination step. 
\item   
$\simDealer$ selects $\istar \in \set{1, \ldots, \numRounds}$ with uniform distribution.
\item  
\label{stp:valuesBeforeIstar}  For each $i\in \set{1,\ldots,\istar-1}$ and for each $J \subseteq \badI \setminus \abortedAfterPre$ s.t. $\goodNum \leq \size{J} \leq \badNum$ do

        \begin{enumerate}                                        
        \item \label{stp:valuesBeforeIstarSelection}
        For each $j \in \range{\partNum}$, if $j\in J$, then $\simDealer$ sets $\randomDomain{j}{i} = \partyInput{j}$,
        else, $\simDealer$ selects uniformly at random $\randomDomain{j}{i} \in \partyDomain{}$.
        \item 
        $\simDealer$ sets $\valueMultiParty{i}{J} \gets f_\secParam(\randomDomain{1}{i},\ldots,\randomDomain{\partNum}{i})$.
        \end{enumerate}

\item 
The simulator $\simDealer$ sends $\proceedMSG$ to all corrupt parties.
\end{enumerate}

\item[Simulating interaction rounds:] In each round $1\le i\le \numRounds$, 
   the simulator $\simDealer$ interacts in three phases with the parties $\set{p_j : j\in \badI \setminus \abortedAfterPre}$, i.e., the corrupt parties which are active so far:
\begin{itemize}
  \item {\bf The peeking phase:} 
                
\begin{itemize}

\item If $i=\istar$, the simulator $\simDealer$ sends the set of inputs 
$\set{\partyInput{j} : j \in \badI \setminus \abortedAfterPre}$ to the trusted party computing $\F$
and receives $\outValueSim$.

\item For each  $J \subseteq \badI \setminus \abortedAfterPre$ s.t. $\goodNum \leq \size{J} \leq \badNum$ do
\begin{enumerate}
\item If $i\in \set{1,\ldots,\istar-1}$, the simulator $\simDealer$ sends the value $\valueMultiParty{i}{J}$ (prepared in the simulation of the preprocessing phase) to all parties in $\superSet{J}$ (i.e., to the adversary).

\item Else, if $i\in \set{\istar,\ldots,\numRounds}$, $\simDealer$ sends the value $\outValueSim$ to all parties in $\superSet{J}$ (i.e., to the adversary).

\end{enumerate}                                 

\end{itemize}

\item   {\bf The abort phase:} 
Upon receiving an $\abort{j}$ message from a party $p_j$, 
\begin{enumerate}
\item $\simDealer$ notifies all corrupted parties that party $p_j$ aborted.
\item $\simDealer$ updates $\abortedI = \abortedI \cup \set{j}$.
\item If at least $\goodNum$ parties have aborted so far, that is $\size{\abortedI} > \goodNum$, the simulator $\simDealer$ proceeds to  
simulate the premature termination step.
\end{enumerate}
\item {\bf The main phase:} $\simDealer$ sends $\proceedMSG$ to all corrupt parties.
\end{itemize}

\item[Simulating the premature termination step:]\quad
\begin{itemize}
\item If the premature termination step occurred in round $i=1$,

\begin{itemize}
\item   The simulator $\simDealer$ receives a set of inputs $\set{\partyInput{j}' : j \in \badI\setminus \abortedI}$ that $\A$
submits to the computation of the dealer. \\
If $\A$ does not submit an input on behalf of $p_j$, i.e., sends an $\abort{j}$ message, then, the simulator $\S$
notifies all corrupted parties that party $p_j$ aborted and updates $\abortedI = \abortedI \cup \set{j}$.
\item The simulator $\simDealer$ sends the set of inputs 
$\set{\partyInput{j}' : j \in \badI \setminus \abortedI}$ to the 
dealer and receives $\outValueSim$.
\end{itemize}

\item If the premature termination step occurred in round $1<i<\istar$,
\begin{enumerate}
\item    Upon receiving an $\abort{j}$ message from a party $p_j$, the simulator $\simDealer$ 
 updates $\abortedI = \abortedI \cup \set{j}$.

\item The simulator $\simDealer$ sends the set of inputs $\set{\partyInput{j} : j \in \badI \setminus \abortedI}$ to the trusted party computing $\F$
and receives $\outValueSim$.
\end{enumerate}

\item ($\diamond$ If the premature termination step occurred in round $\istar\leq i \leq \numRounds$, then $\simDealer$ already has $\outValueSim$ $\diamond$)
\item $\simDealer$ sends the value $\outValueSim$ to each party  
in $\set{p_j : j\in \badI \setminus \abortedAfterPre}$.        

\end{itemize}

\noindent  
\item[Simulating normal termination:]
If the last round of the protocol is completed, then $\simDealer$ sends $\outValueSim$ to each party  
in $\set{p_j : j\in \badI \setminus \abortedAfterPre}$.        
\end{description}
At the end of the interaction with $\A$, the simulator will output 
the sequence of messages exchanged between the simulator and the corrupted parties.

\subsection{Proof of the Correctness of the Simulation for $\MPCWithDealer_\numRounds$} 
\label{app:proofOnLineSimulatorFromDomain}
In order to prove the correctness of the simulation described in \appref{simulatorOnLineFromDomain},
we consider the two random variables from \secref{realIdeal}, both of the form 
$(\viewVAR,\outValueVAR)$, where $\viewVAR$ describes a possible view of $\A$, and 
$\outValueVAR$ describes a possible output of the honest parties (i.e., 
$\outValueVAR\in \partyRange{}$). The first random variable 
$\Real_{\MPCWithDealer_{\numRounds},\A(\aux)}(\vecInput,1^\secParam)$ 
describes the real world -- an execution of Protocol $\MPCWithDealer$, where $\viewVAR$ describes the view of the adversary $\A$ in this execution, and 
$\outValueVAR$ is the output of the honest parties in this execution. 
The second random variable $\Ideal_{\F,\simDealer(\aux)}(\vecInput,1^\secParam)$ describes the 
ideal world -- an execution with the trusted party computing $\F$ (this trusted party is denoted by $T_{\F}$), where $\viewVAR$ 
describes the output of the simulator $\simDealer$ in this execution, and $\outValueVAR$ is the 
output of the honest parties in this execution.
For the rest of this section, we simplify
notations and
denote the above two random variables by  $\Real=
\vect{\viewVAR_{\Real},\outValueVAR_{\Real}}$ and  $\Ideal=
\vect{\viewVAR_{\Ideal},\outValueVAR_{\Ideal}}$ respectively.

We consider the probability of a given pair $\vect{\viewVar,\outValueVar}$ according 
to the two different random variables. 
We compare the two following probabilities: (1) The probability that $\viewVar$ is the view of the adversary $\A$ in an    
   execution of Protocol $\MPCWithDealer_\numRounds$ and $\outValueVar$ is the output 
   of the honest parties in this execution,
   where the probability is taken over the random coins of the dealer $T$.
 (2) The probability that $\viewVar$ is the output of the 
  simulator $\simDealer$ in an ideal-world execution with the trusted party $T_{\F}$ and 
  $\outValueVar$ is the output of the honest parties in this execution, 
   where the probability is taken over the random coins of the simulator $\simDealer$ and the 
   random coins of the ideal-world trusted party $T_{\F}$.

In \lemref{simulatorD} we prove the correctness of the simulation by showing that the two random variables are within statistical distance $1/\poly$.
For the proof of the lemma we need the following claim from ~\cite{GK10}.
\begin{claim}[{\cite[Lemma 2]{GK10}}]\label{clm:alphaGuessingi*}
Let $\A$ be an adversary in Protocol $\MPCWithDealer_\numRounds$ and 
let $\partyInput{1},\ldots,\partyInput{\partNum}$ be a set of inputs.
Assume that for every possible output $\outValue$ obtained by the dealer using this set of inputs
the probability that in a round $i<\istar$ all the values that the adversary sees are equal to $\outValue$
is at least $\alpha$.
Then, the probability that $\A$ guesses $\istar$ (i.e., causes premature termination in round $\istar$) is 
at most $1/\alpha\numRounds$.
\end{claim}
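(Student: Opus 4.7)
The plan is to follow the standard template from Gordon--Katz: show that for any adversary, the probability of causing an abort in the special round $\istar$ is dominated (up to a factor $1/\alpha$) by the probability of causing an abort in a fictitious ``always-before'' game, and then use disjointness of abort events across rounds to sum to at most~$1$.

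First I would fix notation. Let $v_i$ denote the tuple of values that the adversary sees in round $i$ (one coordinate per subset $J\subseteq \badI$ with $\goodNum\le|J|\le\badNum$). Let $v^\star=(\outValue,\ldots,\outValue)$, the ``after-$\istar$'' value. By the construction of $\MPCWithDealer_\numRounds$, conditioned on $\istar=i$ the tuples $v_1,\ldots,v_{i-1}$ are drawn independently across rounds from the ``before'' distribution $D_{\mathrm{bef}}$ (the joint distribution of the $\valueMultiParty{i}{J}$'s with the given inputs), and $v_j=v^\star$ for every $j\ge i$. The hypothesis of the claim states exactly that $\Pr_{v\sim D_{\mathrm{bef}}}[v=v^\star]\ge\alpha$, since $\outValue$ ranges over all possible outputs computable from $\partyInput{1},\ldots,\partyInput{\partNum}$.

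Next I would bound the ``winning'' probability round by round. Fix $i\in\{1,\ldots,\numRounds\}$ and let $W_i$ denote the event that $\A$ aborts exactly in round~$i$. Since $\A$ is a (possibly randomized) function of its view, one can write
\[
\Pr[W_i\mid \istar=i]=\sum_{v_1,\ldots,v_{i-1}}\Pr[v_1,\ldots,v_{i-1}\mid \istar=i]\cdot\Pr[W_i\mid v_1,\ldots,v_{i-1},v_i=v^\star].
\]
On the other hand, conditioning on $\istar>i$ makes $v_1,\ldots,v_i$ i.i.d.\ from $D_{\mathrm{bef}}$, and in particular the distribution of $v_1,\ldots,v_{i-1}$ is identical to the one above and independent of $v_i$. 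Therefore
\[
\Pr[W_i\wedge v_i=v^\star\mid \istar>i]=\Pr[v_i=v^\star]\cdot\sum_{v_1,\ldots,v_{i-1}}\Pr[v_1,\ldots,v_{i-1}\mid \istar>i]\cdot\Pr[W_i\mid v_1,\ldots,v_{i-1},v^\star]\ge\alpha\cdot\Pr[W_i\mid\istar=i].
\]
Rearranging yields $\Pr[W_i\mid\istar=i]\le (1/\alpha)\Pr[W_i\wedge v_i=v^\star\mid\istar>i]\le(1/\alpha)\Pr[W_i\mid\istar>i]$.

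Finally I would sum over rounds. Using that $\istar$ is uniform in $[\numRounds]$ and independent of the coins driving the ``before'' regime,
\[
\Pr[\A\text{ wins}]=\sum_{i=1}^{\numRounds}\Pr[\istar=i]\Pr[W_i\mid\istar=i]\le\frac{1}{\alpha\numRounds}\sum_{i=1}^{\numRounds}\Pr[W_i\mid\istar>i].
\]
The events $\{W_i\}$ are pairwise disjoint in any fixed execution (the adversary aborts in at most one round), and conditioning on $\istar>i$ does not change the joint distribution of $(v_1,\ldots,v_i)$ compared to an ``always-before'' fictitious game; hence $\sum_i \Pr[W_i\mid\istar>i]\le 1$, giving the desired bound $1/(\alpha\numRounds)$. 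The only slightly subtle step is the equality of conditional distributions of $v_1,\ldots,v_{i-1}$ under $\istar=i$ versus $\istar>i$, which I expect to be the main thing to justify carefully; it follows directly from the fact that the dealer samples the pre-$\istar$ values independently in each round from a distribution that does not depend on the value of $\istar$ itself, only on whether the round is below it.
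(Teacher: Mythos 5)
Your proof is correct. Note, however, that the paper contains no proof of this statement to compare against: the claim is imported verbatim as Lemma~2 of~\cite{GK10} and used as a black box. What you have written is essentially the standard Gordon--Katz argument behind that cited lemma: conditioned on $\istar=i$ and on $\istar>i$ the prefix $v_1,\ldots,v_{i-1}$ of the adversary's view has the same (i.i.d.\ ``before'') distribution, the round-$i$ tuple under $\istar>i$ equals the all-$\outValue$ tuple with probability at least $\alpha$ independently of the prefix, hence $\Pr[W_i\mid\istar=i]\le(1/\alpha)\Pr[W_i\mid\istar>i]$, and the sum $\sum_i\Pr[W_i\mid\istar>i]$ is at most $1$ because each term equals the probability of the disjoint event ``abort exactly in round $i$'' in the all-before hybrid game. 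Both independence facts you flag are exactly the ones that need to be said, and they do hold for Protocol $\MPCWithDealer_\numRounds$ since $\istar$ and the per-round inputs $\randomDomain{j}{i}$ are sampled independently across rounds. One small point worth making explicit for randomized $\F$: the output $\outValue$ is itself a random variable, so $v^\star$ is random; since the pre-$\istar$ values are computed with randomness independent of $\outValue$ and the hypothesis $\Pr_{v\sim D_{\mathrm{bef}}}[v=(\outValue,\ldots,\outValue)]\ge\alpha$ is assumed for every possible $\outValue$, your inequalities survive an additional conditioning on $\outValue$, after which the argument is exactly as you wrote it.
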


 As the adversary might have some auxiliary information
 on the inputs of the honest parties and know the value of
 $f_\secParam(x_1,\dots,x_\partNum)$, the adversary might be able to deduce
 that a round is not $\istar$ if not all the values that it gets are equal
 to this value (or a possible value for randomized functionalities).
 Specifically, in the worst case scenario, the adversary knows the inputs of all the honest parties. In the next
 claim we show a lower bound on the probability that all the values that
 the adversary obtains in a round $i < \istar$ of Protocol $\MPCWithDealer_\numRounds$ are all equal to a fixed
 value.

\begin{claim}
\label{clm:equalValuesFromDomain}
Let $\domainSize(\secParam)$ and $\rangeSize(\secParam)$ be the size of the domain and range, respectively, of a randomized functionality $\F$ computed by the protocol $\MPCWithDealer_\numRounds$.
Let $\epsilon$ be a number such that 
$\Pr [f_\secParam(\partyInput{1},\ldots,\partyInput{\partNum}) = \outValue_\ell] \geq \epsilon$
for every set of inputs $\partyInput{1},\ldots,\partyInput{\partNum}$ 
and for each $\outValue_\ell$ from the range of $f_\secParam(\partyInput{1},\ldots,\partyInput{\partNum})$.
Then, 
the probability that in a round $i<\istar$ all the values that the adversary sees are equal to a specific $\outValue$
                is at least $\left(\epsilon/\domainSize(\secParam)^{\partNum} \right)^{2^\badNum-1}$.            
                
                Furthermore,  if $\F$ is deterministic, then, this probability is at least $(1/\domainSize(\secParam)^{\partNum})^{2^\badNum-1}$.
\end{claim}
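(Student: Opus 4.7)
The plan is to lower-bound the probability that every value $\valueMultiParty{i}{J}$ visible to the adversary in round $i<\istar$ simultaneously equals the prescribed $\outValue$, by exploiting the fact that the values for different subsets $J$ are computed with fresh, independent randomness and then bounding each single-set probability from below by $\epsilon/\domainSize^{\partNum}$.

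First, I would fix the set $\badI$ of corrupt indices, and recall that in round $i<\istar$ the dealer sends $\valueMultiParty{i}{J}$ to the adversary only for those $J\subseteq\badI$ satisfying $\goodNum\le|J|\le\badNum$; since each such $J$ is a nonempty subset of $\badI$, the number of observed values is at most $2^{|\badI|}-1\le 2^{\badNum}-1$. I would also take $\outValue$ to be a value in the range of $f_\secParam(\partyInput{1},\ldots,\partyInput{\partNum})$ on the real inputs, as otherwise the target probability is vacuously zero and the statement has no content.

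Second, I would establish independence. Inspecting \stepref{beforeIStar1} of $\MPCWithDealer_\numRounds$ in \figref{MPCWithDealer}, each iteration of the outer loop over $J$ draws fresh uniform values $\randomDomain{j}{i}\in\partyDomain$ for $j\notin J$ and uses fresh internal randomness for $f_\secParam$; the notation $\randomDomain{j}{i}$ suppresses the dependence on $J$ but the semantics re-sample per iteration. Hence the random variables $\set{\valueMultiParty{i}{J}}_{J}$ are mutually independent, and it suffices to lower-bound $\Pr[\valueMultiParty{i}{J}=\outValue]$ for a single $J$ and multiply the bounds.

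For a single subset $J$, I would use the sufficient event that (a) $\randomDomain{j}{i}=\partyInput{j}$ for every $j\notin J$, which occurs with probability $(1/\domainSize)^{\partNum-|J|}\ge 1/\domainSize^{\partNum}$ because each of the $\partNum-|J|$ coordinates is sampled uniformly from $\partyDomain$, and (b) conditioned on (a), the value $f_\secParam(\partyInput{1},\ldots,\partyInput{\partNum})$ equals $\outValue$, which has probability at least $\epsilon$ by the hypothesis on $\epsilon$ (since $\outValue$ lies in the range of $f_\secParam$ on the true inputs). Together these yield $\Pr[\valueMultiParty{i}{J}=\outValue]\ge\epsilon/\domainSize^{\partNum}$, and multiplying over the at most $2^{\badNum}-1$ subsets gives $(\epsilon/\domainSize^{\partNum})^{2^{\badNum}-1}$. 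For the deterministic ``furthermore'' case, event (a) alone already forces $\valueMultiParty{i}{J}=f_\secParam(\partyInput{1},\ldots,\partyInput{\partNum})=\outValue$, so the factor $\epsilon$ is replaced by $1$ and the bound specializes to $(1/\domainSize^{\partNum})^{2^{\badNum}-1}$. The only nontrivial point in the whole argument is the independence claim, and once that is made explicit the rest is routine counting.
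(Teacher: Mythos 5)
Your proposal is correct and follows essentially the same route as the paper: lower-bound each $\valueMultiParty{i}{J}$ visible to the adversary by the event that the dealer's random inputs $\randomDomain{j}{i}$ for $j\notin J$ hit the true inputs (probability at least $1/\domainSize^{\partNum}$), pick up a factor $\epsilon$ from the fresh evaluation of $f_\secParam$ in the randomized case, and multiply over the fewer than $2^{\badNum}$ independent subsets. Your explicit remarks that the per-$J$ sampling is independent and that $\outValue$ must be an achievable output are exactly the (implicit) points the paper relies on, so there is nothing to add.
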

\begin{proof}
We start with the case of a deterministic functionality $\F$.
Recall that $\partyInput{1},\ldots,\partyInput{\partNum}$ are the inputs used by the dealer to obtain $\outValue = f_\secParam(\partyInput{1},\ldots,\partyInput{\partNum})$ 
and $\valueMultiParty{i^\star}{J} = \outValue$ for each $J \subseteq \range{\partNum}$ s.t. $\goodNum \leq \size{J} \leq \badNum$.
Let $J$ be such that the adversary obtains $\valueMultiParty{i}{J}$ in round $i<\istar$.
Recall that $\randomDomain{1}{i},\ldots,\randomDomain{\partNum}{i}$ are the inputs used by the dealer to obtain $\valueMultiParty{i}{J}$,
that is, $\valueMultiParty{i}{J} = f_\secParam(\randomDomain{1}{},\ldots,\randomDomain{\partNum}{})$,
where $\randomDomain{j}{i}=\partyInput{j}$ for each $j\in J$
and $\randomDomain{j}{i}$ is selected uniformly at random from $\randomDomain{j}{}$ for every $j\notin J$.
We bound the probability that $\valueMultiParty{i}{J}=\outValue$ by the probability that $\randomDomain{j}{i}=\partyInput{j}$ for all $j\notin J$.
The probability that $\randomDomain{j}{i}=\partyInput{j}$ is $1/\domainSize$.
Therefore, the probability that both sets are the same is $(1/\domainSize)^{\partNum - \size{J}} > (1/d)^{\partNum}$.

In each round of the protocol, $\A$ obtains the value
$\valueMultiParty{i}{J}$  for each subset $\superSet{J}$ s.t. $J \subseteq
\range{\partNum}$ and $\goodNum \leq \size{J} \leq \badNum$, therefore,
$\A$ obtains less than  $2^\badNum$ values.
For each such two values $\valueMultiParty{i}{J}$ and $\valueMultiParty{i}{J'}$ obtained by $\A$ in round $i<\istar$, 
the sets of inputs $\set{\randomDomain{j}{i} : j\notin J}$ and $\set{\randomDomain{j}{i} : j\notin J'}$
are totally independent.
Therefore, the probability that all the values that the adversary sees in round $i<\istar$ are equal to 
$\outValue = f_\secParam(\partyInput{1},\ldots,\partyInput{\partNum})$ 
is at least $(1/\domainSize^{\partNum})^{2^\badNum-1}$.

For randomized functionality $\F$, 
we think of the evaluation of $f_\secParam(\randomDomain{1}{},\ldots,\randomDomain{\partNum}{})$ as two steps:
first $\randomDomain{j}{}$ is randomly chosen from $\partyDomain$ for every $j \not\in J$ and
then the randomized functionality is evaluated. 
Therefore, as $\A$ obtains less than $2^\badNum$ values in each round $i<i^\star$,
that the probability that all the values that the adversary sees in each round $i<i^\star$
are equal to the specific $\outValue$ is at least $(1/\domainSize^{\partNum})^{2^\badNum-1} \cdot \epsilon^{2^\badNum-1}$.
\end{proof}

\medskip
In the next lemma, we prove the correctness of the simulation by using the previous two lemmas.
\begin{lemma}
\label{lem:simulatorD}
Let $\F$ be a (possibly randomized) functionality, $\A$ be a non-uniform polynomial-time adversary corrupting $\badNum<2\partNum/3$ parties in an execution of Protocol $\MPCWithDealer$, and $\simDealer$ be 
the simulator described in \appref{simulatorOnLineFromDomain} (where $\simDealer$ controls the same parties as $\A$). Then, for every $\secParam\in\NN$, for every $\vecInput \in (\partyDomain)^\partNum$, and for every ${\rm aux} \in\set{0,1}^*$
 $$\operatorname{SD}{\Big(\Real_{\MPCWithDealer_{\numRounds},\A(\aux)}(\vecInput,1^\secParam),} 
{\Ideal_{\F,\simDealer(\aux)}(\vecInput,1^\secParam)}\Big) 
 \leq 2 \rangeSize(\secParam)  {\domainSize(\secParam)^\partNum} / \left(\numRounds(\secParam)\right)^{2^\badNum},$$
where $\domainSize(\secParam)$ and $\rangeSize(\secParam)$ are the sizes of the range and the domain of $\F$, respectively, and $\numRounds(\secParam)$ be the number of rounds in the protocol.

Furthermore, if $\F$ is deterministic, then, the statistical distance between these two random variables is at most
$({\domainSize(\secParam)^\partNum  })^{2^\badNum}/\numRounds(\secParam)$.
\end{lemma}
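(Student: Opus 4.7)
The plan is a coupling argument. I would couple the real execution of $\MPCWithDealer_\numRounds$ with the ideal execution against $\simDealer$ by identifying their random choices: the dealer's special round $\istar$, the random honest-input fill-ins $\randomDomain{j}{i}$ used in Case~I, and the coins used to evaluate $f_\secParam$ in preprocessing are set equal to the corresponding random choices of $\simDealer$ and to the randomness of the trusted party, while $\A$'s tape is also shared. Under this coupling I would exhibit a single bad event $E$, namely ``the adversary causes premature termination exactly in round $\istar$'', and show that the full transcripts (the view of $\A$ together with the honest parties' output) coincide on $E^{c}$; hence the statistical distance is bounded by $\Pr[E]$.

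Verifying the coupling requires a case analysis on the adversary's behaviour. For rounds $i<\istar$ the values $\valueMultiParty{i}{J}$ shown in the peeking phase are generated by the same Case~I rule from the same random tape in both worlds; at round $\istar$ the common value shown in every peek is $\outValue$ in the real world and $\outValueSim$ in the ideal world, which coincide because the trusted party is handed the same inputs and coins that the dealer uses to compute $\outValue$; in rounds after $\istar$ this single common value is again shown. For the honest output, normal termination and premature termination at $i>\istar$ both yield $\outValue=\outValueSim$, while premature termination at $i<\istar$ yields $\valueMultiParty{i-1}{J}$ in the real world and, in the ideal world, $f_\secParam$ evaluated on exactly the inputs the simulator forwards to the trusted party (the real honest inputs, the adversary's inputs for still-active corrupt parties, and independent uniform fill-in for the aborted corrupt parties), so again the two outputs agree. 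The only remaining discrepancy is premature termination at $i=\istar$: there the real world uses fresh uniform fill-in for mid-protocol aborted corrupt parties, whereas the ideal world has already committed $\outValueSim$ at the peeking phase of round $\istar$, which reuses their original declared inputs; this is precisely~$E$.

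It remains to bound $\Pr[E]$, and here I invoke the meta-game of \clmref{alphaGuessingi*}: if in every round $i<\istar$ the probability that all values peeked by $\A$ coincide with the eventual output $\outValue$ is at least $\alpha$, then $\Pr[E]\le 1/(\alpha\cdot\numRounds)$. \clmref{equalValuesFromDomain} supplies $\alpha\ge(1/\domainSize^{\partNum})^{2^{\badNum}-1}$ when $\F$ is deterministic and $\alpha\ge(\epsilon/\domainSize^{\partNum})^{2^{\badNum}-1}$ in general, where $\epsilon$ is a uniform lower bound on the probability of any range element that the dealer can output (of order $1/\rangeSize$ after a small correction for outputs of negligible mass). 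Substituting these estimates into \clmref{alphaGuessingi*} yields the two stated statistical-distance bounds.

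I expect the main obstacle to be the case analysis behind the coupling, and within it the alignment of inputs at premature termination at $i<\istar$: one must verify that the Case~I values $\valueMultiParty{i'}{J'}$ for $(i',J')\ne(i{-}1,J)$ that $\A$ has already seen use independent honest-input fill-in tapes, so that conditioning on $\A$'s view leaves the fill-in used to compute $\valueMultiParty{i-1}{J}$ distributed uniformly and independently. This independence is what makes the simulator's trusted-party query at premature termination produce exactly the same distribution as the real honest output, and once it is pinned down the quantitative bounds from \clmref{alphaGuessingi*} and \clmref{equalValuesFromDomain} plug in directly.
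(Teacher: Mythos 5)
Your overall skeleton is the same as the paper's: both arguments show that, conditioned on the adversary not causing premature termination exactly in round $\istar$, the real and ideal executions are identically distributed (your coupling is simply an explicit phrasing of the paper's observation that $\simDealer$ runs the very same random process as the dealer, with the output obtained externally), so the statistical distance is bounded by the probability of guessing $\istar$, which is then bounded via \clmref{alphaGuessingi*} and \clmref{equalValuesFromDomain}. Your case analysis of the transcript agreement, including the independence of the fill-in inputs used for the set containing the honest parties, is sound, and for deterministic $\F$ the argument is complete and gives the claimed bound.

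The gap is in the randomized case. You instantiate \clmref{equalValuesFromDomain} with ``$\epsilon$ a uniform lower bound on the probability of any range element, of order $1/\rangeSize$ after a small correction for outputs of negligible mass.'' No such bound exists in general: a randomized $f_\secParam$ may assign probability $2^{-\secParam}$ to some outputs, so the uniform $\epsilon$ can be exponentially small and \clmref{alphaGuessingi*} then gives a vacuous bound; moreover the exceptional outputs need not have negligible mass. The paper closes this with a threshold argument that is the real content of the randomized case: fix a parameter $p_0$, call $\outValue$ \emph{heavy} if $\pr[f_\secParam(\cdot)=\outValue]>1/(p_0\cdot\rangeSize)$ and \emph{light} otherwise; the event that the dealer's actual output is light has probability at most $\rangeSize\cdot 1/(p_0\rangeSize)=1/p_0$ by a union bound, and conditioned on a heavy output one applies \clmref{equalValuesFromDomain} with $\epsilon=1/(p_0\rangeSize)$ together with \clmref{alphaGuessingi*} to bound the guessing probability by $(\domainSize^{\partNum}\cdot p_0\cdot\rangeSize)^{2^\badNum-1}/\numRounds$. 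Choosing $p_0=\numRounds^{2^{-\badNum}}/(\rangeSize\cdot\domainSize^{\partNum})$ balances the two terms; note that the light-output term $1/p_0$ is then of the same order as the heavy-output term and contributes half of the final bound $2\rangeSize\domainSize^{\partNum}/\numRounds^{2^{-\badNum}}$ derived in the paper's proof, so it is not a negligible correction. Without this heavy/light decomposition and the tuning of $p_0$, substituting a single $\epsilon$ into \clmref{alphaGuessingi*} does not yield the stated bound for randomized functionalities.
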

\begin{proof}
Our goal here is to show that the  statistical distance between the above two
random variables is at most as described in lemma.
The flow of our proof is as follows. We first bound the statistical distance between the two random variables by the probability that the
adversary $\A$ guesses the special  round $\istar$. We do this by showing
that, conditioned on the event that the adversary  fails to guess round
$\istar$, the two random variables are identically distributed. Then,  we
bound the probability of guessing $\istar$ in time using \clmref{alphaGuessingi*} and \clmref{equalValuesFromDomain}. 

Observe that, in the simulation, $\simDealer$ follows the same instructions as the
trusted party  $T$ in Protocol $\MPCWithDealer_\numRounds$, except for two
changes.  First, $\simDealer$ does not compute the output $\outValueSim$,  but rather gets
$\outValueSim$ externally from $T_{\F}$.  The simulator obtains this value
either in the premature termination phase (if $i<\istar$) or in the
peeking stage when $i=\istar$.  The second difference is that in the case
of a premature termination, $\simDealer$  will always use $\outValueSim$ as its
message to the corrupt parties, while $T$ will use  the value from round
$\istar-1$ of the appropriate subset $Q_J$ as its message.
%

 We analyze the probabilities of $\vect{\viewVar,\outValueVar}$ in the two random variables according to weather the  
 premature termination occurred before, during, or after the special round $\istar$.
\paragraph {Premature termination before round 
{\boldmath $\istar$}.}  We argue that in this case, both in the real
protocol and in the simulation,  the view of $\A$ is identically
distributed in the two worlds.  $\simDealer$ follows the same random process in
interacting with $\A$ (before sending the last message in the premature
termination) as does $T$ in the real-world  execution.  The view of the adversary consists
of values which are outputs of evaluations of the function $f_\secParam$ on
the same input distributions. 
The adversary does not learn anything about the inputs of the honest parties, hence,  
its decision to abort does not depend on any new information it obtains during the interaction rounds so far.
In addition, in both
worlds, the output of the honest parties is the evaluation of the function
$f_\secParam$ on the same set of inputs for the active parties and
uniformly selected random inputs for the aborted parties.
 
\paragraph {Premature termination after round 
\boldmath{$\istar$} or never occurs.} 
Here $\viewVar$ must contain $\valueMultiParty{\istar}{J}$ for some $J$, which, in the 
real-world execution, is equal to the output value of all sets for any round 
$i > \istar$ (recall that the output value of the honest parties will be determined by 
one such value), and in the simulation it equals $\outValueSim$. Thus, in both 
scenarios, $\viewVar$ must be consistent with $\istar$ and with $\outValueVar$, hence,
$\viewVar$ completely determines $\outValueVAR$. Again, since $\simDealer$ follows the same 
random process in interacting with $\A$ as does $T$ in the real-world execution
the probabilities are the same. 

 \paragraph {Premature termination in round \boldmath{$\istar$}.}
 This is the interesting case, which causes the statistical distance.
 In the real world, the output of the honest parties is
 $\valueMultiParty{\istar-1}{J}$ for some $J$, while in the ideal world their output is
 $\outValueSim \gets f_\secParam(\partyInput{1},\ldots,\partyInput{\partNum})$.
 In the first case the output is independent of the adversary's view,
 while in the second case, the view determines the output.
 Thus, in this case the probabilities of the views are different. However, we will
 show that the event of premature termination in round $\istar$ happens with small probability.

 \medskip

 Since the probabilities of $\vect{\viewVar,\outValueVar}$ in the first two cases are equal, the statistical
 distance between the two random variables is bounded by the probability of
 the adversary  guessing $\istar$ correctly (before the abort phase of
 round $\istar$).  
 That is,
 \begin{eqnarray}\label{eqn:SD}
 \SD{\Ideal}{\Real} \le\pr[\text{Premature termination in round } \istar].
 \end{eqnarray}

We next use \clmref{alphaGuessingi*} and \clmref{equalValuesFromDomain} to bound the probability
that the adversary guesses $i^\star$. However, there might be values such that 
$\pr[\outValue = f_\secParam(\partyInput{1},\ldots,\partyInput{\partNum})]$
is small. Therefore, we consider two events of guessing $i^\star$,
where $p_0$ is a parameter specified below.
We call an output values $\outValue$ \emph{heavy} if
$\pr[\outValue = f_\secParam(\partyInput{1}{},\ldots,\partyInput{\partNum}{})] > 1 / (p_0 \cdot \rangeSize)$, otherwise, we call $\outValue$ \emph{light}.
\begin{description}
        \item[Case 1:] The adversary guesses $i^\star$ with some light $\outValue$.
                                                                        Since there are at most $\rangeSize$ possible values of 
                                                                        $f_\secParam(\partyInput{1}{},\ldots,\partyInput{\partNum}{})$,
                                                                        the probability of this event, by the union bound, is at most $1 / p_0$.
        \item[Case 2:]  The adversary guesses $i^\star$ with some heavy $\outValue$.
                        Thus, by \clmref{equalValuesFromDomain} where $\epsilon = p_0 \cdot \rangeSize$, the probability of
                        $\outValue =\valueMultiParty{i}{J}$ for all values that the adversary
                                                                        sees in round $i<i^\star$ is at least 
                                                                        $(1/\domainSize^{\partNum}\cdot p_0\cdot \rangeSize)^{2^\badNum-1}$.
                                                                        By \clmref{alphaGuessingi*}, the probability that the adversary guesses
                                                                        $i^\star$ conditioned on the $\outValue$ being heavy is at most
                                                                        $(\domainSize^{\partNum} \cdot p_0 \cdot \rangeSize)^{2^\badNum-1} / \numRounds$.
\end{description}
%

We take
$p_0 = \numRounds^{2^{-\badNum}} / (\rangeSize\cdot \domainSize^{\partNum})$;
the total probability that the adversary guesses $i^\star$ in the two cases is at most
$$\frac{(\domainSize^{\partNum} \cdot p_0 \cdot \rangeSize)^{2^\badNum-1}}{\numRounds} + \frac{1}{p_0} \leq 2 \cdot \frac{\rangeSize\cdot \domainSize^{\partNum}}{\numRounds^{2^{-\badNum}}} . $$
Therefore, by \eqnref{SD}, the
statistical distance between the two random variables in the randomized case is as claimed in the
lemma.

The case that $\F$ is deterministic is simpler. By combining
\clmref{alphaGuessingi*} and \clmref{equalValuesFromDomain} we get that the
probability that $\A$ guesses $i^\star$ is at most
$(\numRounds/\domainSize(\secParam)^{\partNum})^{2^\badNum-1}$.  By
applying \eqnref{SD}, we get the bound on statistical distance between the
two random variables for the deterministic case as claimed in the lemma.
%
%
\end{proof}

\subsection{The Simulator for the Protocol with the Dealer for Polynomial Range}
\label{app:simulatorOnLineFromRange}
\begin{lemma}
Let $\F$ be a (possibly randomized) functionality.
For every non-uniform polynomial-time adversary $\A$ corrupting $\badNum<2\partNum/3$ parties in an execution of Protocol $\MPCWithDealerRange$, there exists a simulator $\simDealer$ in the ideal model,
that simulates the execution of $\A$ (where $\simDealer$ controls the same parties as $\A$).
That is, for every $\secParam\in\NN$, for every $\vecInput \in (\partyDomain)^\partNum$, and for every ${\rm aux} \in\set{0,1}^*$ 
$$\operatorname{SD}\left(\Real_{\MPCWithDealer_{\numRounds},\A(\aux)}(\vecInput,1^\secParam), 
{\Ideal_{\F,\simDealer(\aux)}(\vecInput,1^\secParam)}\right) < 
\frac{\left(2p(\secParam) \cdot \rangeSize(\secParam)\right)^{2^\badNum}}{\numRounds(\secParam)}
 + \frac{1}{2p(\secParam)},$$
where $\rangeSize(\secParam)$ is the size of the range of $\F$, with probability $1/(2p(\secParam))$ each value $\valueMultiParty{i}{J}$ in round $i <i^\star$ is selected uniformly at random from the range, and $\numRounds(\secParam)$ be the number of rounds in the protocol.
%
\end{lemma}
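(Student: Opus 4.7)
The plan is to follow the blueprint of the proof of \lemref{simulatorD}, adapting both the simulator and the analysis to the new preprocessing rule. The simulator $\simDealer$ is defined exactly as in \appref{simulatorOnLineFromDomain}, except that \stepref{valuesBeforeIstar} is replaced by its $\MPCWithDealerRange$ counterpart: for each $i<\istar$ and each $J\subseteq \badI\setminus\abortedAfterPre$ with $\goodNum\le |J|\le\badNum$, with probability $1/(2p)$ it picks $\valueMultiParty{i}{J}$ uniformly in $\partyRange$, and with the remaining probability it evaluates $f_\secParam$ on $\set{\partyInput{j}}_{j\in J}$ together with fresh uniform inputs for the other parties. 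All other instructions (the peeking, abort, and premature termination phases, and the interaction with $T_{\F}$) are unchanged; in particular $\simDealer$ always delivers the value $\outValueSim$ received from the trusted party to the honest parties whenever the simulation reaches premature or normal termination.

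The bulk of the analysis is inherited from the proof of \lemref{simulatorD}. As there, I would split on the round in which premature termination occurs. If termination never occurs, or occurs in some round $i>\istar$, the two random variables are identically distributed: $\simDealer$ uses exactly the same stochastic process as $T$ to build the adversary's view in rounds $1,\dots,\istar-1$, and the view in rounds $\geq \istar$ contains $\outValueSim$, which is identical to the honest parties' output in both worlds. The new wrinkle is the case of premature termination in some round $1<i<\istar$: in the real world the honest parties output $\valueMultiParty{i-1}{J}$, while in the ideal world they output $\outValueSim$. Conditioned on the event that $\valueMultiParty{i-1}{J}$ was sampled via $f_\secParam$ rather than uniformly from $\partyRange$ (an event of probability $1-1/(2p)$, independent of the adversary's view up to that moment), the two distributions on $(\viewVar,\outValueVar)$ agree pointwise. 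Hence this case contributes at most $1/(2p)$ to the statistical distance.

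It remains to bound the probability that the adversary triggers premature termination in round $i=\istar$. I would replace \clmref{equalValuesFromDomain} by a cleaner bound: for any fixed target value $\outValue\in\partyRange$ and any $J$, the chance that $\valueMultiParty{i}{J}=\outValue$ in round $i<\istar$ is at least $\frac{1}{2p}\cdot\frac{1}{\rangeSize}=\frac{1}{2p\rangeSize}$, purely from the uniform-range branch. Since the randomness used to sample $\valueMultiParty{i}{J}$ is independent across the at most $2^\badNum$ sets $J$ that the adversary can peek at in a single round, the probability that all peeked values equal $\outValue$ is at least $\bigl(1/(2p\rangeSize)\bigr)^{2^\badNum}$. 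Crucially, this lower bound is uniform in $\outValue$, so \clmref{alphaGuessingi*} applies with $\alpha=\bigl(1/(2p\rangeSize)\bigr)^{2^\badNum}$ and yields that $\A$ causes premature termination in round $\istar$ with probability at most $(2p\rangeSize)^{2^\badNum}/\numRounds$. Adding the two contributions gives the claimed bound $(2p\rangeSize)^{2^\badNum}/\numRounds+1/(2p)$.

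The main conceptual step, and the reason the modified protocol buys us the polynomial-range guarantee, is precisely this uniform-in-$\outValue$ lower bound: in the $\MPCWithDealer_\numRounds$ analysis we needed the heavy/light case split because some outputs of $f_\secParam$ can be arbitrarily rare, forcing a factor of $\domainSize^\partNum$ per peeked value. Injecting a $1/(2p)$-mass uniform distribution over $\partyRange$ replaces $\domainSize^\partNum$ by $2p\rangeSize$ and lets us avoid any dependence on the domain size, at the price of the extra $1/(2p)$ correctness error.
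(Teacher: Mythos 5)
Your proposal is correct and follows essentially the same route as the paper: the same modified sampling rule in the simulator's preprocessing, the same uniform-branch lower bound of $\left(1/(2p(\secParam)\cdot\rangeSize(\secParam))\right)^{2^\badNum}$ plugged into \clmref{alphaGuessingi*}, and the same case split over the round of premature termination, yielding the bound $(2p\rangeSize)^{2^\badNum}/\numRounds + 1/(2p)$. The only (harmless) deviation is that the paper's simulator explicitly outputs $\bot$ with probability $1/(2p)$ upon premature termination before round $\istar$, whereas you keep a non-failing simulator and absorb the same $1/(2p)$ loss via conditioning on the uniform-vs-$f_\secParam$ coin, which is independent of the adversary's view; both arguments give the identical bound.
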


\begin{proof}
The simulators and their proofs for Protocol $\MPCWithDealerRange$ and
Protocol $\MPCWithDealer$ are similar;
we only present (informally) the differences between the two simulators and the two proofs.

\paragraph{The modified simulator.}
Recall that the protocols $\MPCWithDealer$ and $\MPCWithDealerRange$ are different only in
\stepref{beforeIStar} of the share generation step.
In $\MPCWithDealerRange$, each value $\valueMultiParty{i}{J}$ prior to round $\istar$
is chosen with probability $1/(2p)$ as a random value
from the range of $f_\secParam$ and with probability $1-1/(2p)$ it is chosen
just like in \figref{MPCWithDealer}.
There are two modifications to the simulator.
The first modification in the simulator is in \stepref{valuesBeforeIstar} in the simulation of the preprocessing phase,
i.e., in the computation of $\valueMultiParty{i}{J}$ for $i<\istar$.
The step that replaces \stepref{valuesBeforeIstar} appears below.
\begin{itemize}
                              \item                   
For each $i\in \set{1,\ldots,\istar-1}$ and for each $J \subseteq \badI \setminus \abortedAfterPre$ s.t. $\goodNum \leq \size{J} \leq \badNum$ do 
                        
                                \begin{enumerate}
                                \item with probability $1/(2p)$, select uniformly at random $z_J^i \in \partyRange{}$ and set $\valueMultiParty{i}{J} = z_J^i$.
                                                                        \item   with the remaining probability $1-1/(2p)$,
                                                                                \begin{enumerate}
                                                                                        \item For each $j \in \range{\partNum}$, if $j\in J$, then $\simDealer$ sets $\randomDomain{j}{i} = \partyInput{j}$,
                                 else, $\simDealer$ selects uniformly at random $\randomDomain{j}{i} \in \partyDomain{}$.
                                 \item $\simDealer$ sets $\valueMultiParty{i}{J} \gets f_\secParam(\randomDomain{1}{i},\ldots,\randomDomain{\partNum}{i})$.
                                                                                \end{enumerate}

\end{enumerate}                                 
                                                                                                 
                                                                \end{itemize}    
The second modification is less obvious.
Recall that both random variables appearing in the lemma contain the output of the honest parties.
In the ideal world, the honest parties always output $f_\secParam$ applied to their inputs.
In the real world, in a premature termination in round $i<\istar$, with probability $1/(2p)$, the honest parties output 
a random value from the range of $f_\secParam$. It is hard to simulate the output of the honest parties in first case.\footnote{For example, there might not be possible inputs of the corrupt parties causing the honest parties to output such output.} We simply
modify the simulator such that with probability $1/(2p)$ the simulator returns $\bot$, i.e., it announces that the simulation has failed.
The new premature termination step appears below.
\begin{description}
\item[Simulating the premature termination step:] \quad 
\begin{itemize}
        \item If the premature termination step occurred in round $i<\istar$,
                \begin{itemize}
                        \item With probability $1/(2p)$, for each $j\in \badI \setminus \abortedAfterPre$ send $\abort{j}$ to the trusted party computing $\F$ and return $\bot$.
                        \item With the remaining probability $1-1/(2p)$, execute the original simulation of the premature termination step 
                                        (appearing in \appref{simulatorOnLineFromDomain}).
                \end{itemize}
        \item Else ($i\geq i^\star$), execute the original simulation of the premature termination step (appearing in \appref{simulatorOnLineFromDomain}).
\end{itemize}
\end{description}

\paragraph{The modified proof.}
The proof to the simulator for $\MPCWithDealerRange$ remains basically the same, except for two changes.
We first modify \clmref{equalValuesFromDomain} below and prove a slightly different claim, which changes the probability of the adversary guessing $i^\star$.

\begin{claim}
\label{clm:equalValuesFromRange}
                Let $\rangeSize(\secParam)$ be the size of the range of the (possibly randomized) functionality $\F$
                computed by the protocol $\MPCWithDealerRange_\numRounds$ and $\outValue\in \partyRange$.
                Then, the probability that in a round $i<\istar$ all the values that the adversary sees are 
                equal to $\outValue$ is at least 
                $(1/2p(\secParam) \cdot \rangeSize(\secParam))^{2^\badNum}$.
\end{claim}
\begin{proof}
According to the protocol, there are two different ways to produce each
value $\valueMultiParty{i}{J}$ in round $i<\istar$: (1) Compute
$f_\secParam$ on a set of inputs and a set of uniformly selected values
from the domain of the functionality, and  (2) Set $\valueMultiParty{i}{J}$
as a uniformly selected value from the range of the functionality.
We ignore the first case. In the second option, with probability
$1/2p$, the value $\valueMultiParty{i}{J}$ is uniformly selected from
the range.  Hence, the probability that $\valueMultiParty{i}{J}$ is equal
to a specific value is at least $1/(2p \cdot \rangeSize)$.

It was explained in the proof of \clmref{equalValuesFromDomain} that in each round of the protocol, $\A$ obtains less than  $2^\badNum$ values.
Therefore, we conclude that he probability that all the values that $\A$ obtains in round $i<\istar$ are all equal
to $\outValue$ is at least $(1/(2p \cdot \rangeSize))^{2^\badNum}$.
\end{proof}

\medskip By applying the \clmref{alphaGuessingi*} we conclude that the
probability of the adversary guessing $\istar$ correctly in Protocol
$\MPCWithDealerRange_\numRounds$ is at most $(2p \cdot
\rangeSize)^{2^\badNum}/\numRounds$.  In case of a premature termination in
round $i<i^\star$,  with probability $1-1/(2p)$ in both the ideal world and
real world, the value that the honest parties  output is the evaluation of
$f_\secParam$ on the inputs of the active parties and random inputs for the
parties that aborted. However, with probability  $1/(2p)$, if premature
termination occurs prior to round $i^\star$, the output of the honest
parties Protocol $\MPCWithDealerRange_\numRounds$ is a random value from
the range of  $f_\secParam$; the simulator fails to simulate the execution
in this case and outputs $\bot$. Thus,
 \begin{eqnarray*}
\lefteqn{ \SD{\Ideal}{\Real} }\\
 & \le &\pr[\text{Premature termination in round } \istar]
 +(1/2p)\cdot \pr[\text{Premature termination before round } \istar]\\
 & \le & (2p \cdot \rangeSize)^{2^\badNum}/\numRounds +(1/2p).
\end{eqnarray*}
Therefore, the statistical distance is
as claimed.
\end{proof}

\section{Proof of Security for the Protocols without the Dealer}

\subsection{The Simulator for Protocol $\MPC_\numRounds$}
\label{app:OfflineProof}

We next prove that Protocol $\MPC_\numRounds$ is a secure real-world
implementation of the (ideal) functionality of Protocol $\MPCWithDealer_\numRounds$.
By \lemref{simulatorD}, when $\numRounds(\secParam)$ is sufficiently large, Protocol $\MPCWithDealer_\numRounds$
is a $1/p$-secure protocol for $\F$. Thus, together we get that Protocol $\MPC_\numRounds$ is a 
$1/p$-secure protocol for $\F$.
according to the definition appears in \appref{securityWithBlaming}.
We analyze Protocol $\MPC_\numRounds$ in a hybrid model where there are 3 ideal functionalities:
\begin{description}

\item [Functionality {\boldmath $\MultiShareGenDomainWithAbort_\numRounds$}.]
This functionality is an (ideal) execution of Functionality 
$\MultiShareGenDomain_\numRounds$ in the secure-with-abort and cheat-detection model.
That is, the functionality gets a set of inputs.
If the adversary sends $\abort{j}$ for some corrupt party $p_j$, then this message is sent to the honest parties and the execution terminates.
Otherwise, Functionality $\MultiShareGenDomain_\numRounds$ is executed. 
Then, the adversary gets the outputs of the corrupt parties. Next, the adversary decides whether to halt or to continue:
If the adversary decides to continue, it sends a $\proceedMSG$ message and the honest parties 
are given their outputs.
Otherwise, the adversary sends $\abort{j}$ for some corrupt party $p_j$, and this 
message is sent to the honest parties.
\item [Functionality {\boldmath $\FairMPC$}.]
This functionality computes the value $f_\secParam(x_1,\ldots,x_\partNum)$.
That is, the functionality gets a set of inputs.
If a party $p_j$ sends $\abort{j}$ message then $\partyInput{j}$ selected from $\partyDomain{}$ with uniform distribution,
computes an output of the randomized functionality $f_\secParam$ for them, and gives it to all parties.
When this functionality is executed, an honest majority is 
guaranteed, hence, the functionality can be implemented with full security (e.g., with 
fairness).
\item [Functionality {\boldmath $\Reconstruction$}.] This functionality is described 
in \figref{Reconstruction}; this functionality is used in the premature termination step in Protocol 
$\MPC_\numRounds$ for reconstructing the output value from the shares of the 
previous round. When this functionality is executed, an honest majority is guaranteed, 
hence, the functionality can be implemented with full security (e.g., with fairness).    
\end{description}

We consider an adversary $\A$ in the hybrid model described above, corrupting $\badNum<2\partNum/3$ of the parties that engage in Protocol $\MPC_\numRounds$.
We next describe a simulator $\S$ interacting with the honest parties in the 
ideal-world via a trusted party $T_{\MPCWithDealer}$ executing Functionality $\MPCWithDealer_\numRounds$. The simulator $\S$ runs the adversary $\A$ internally
with black-box access. Simulating $\A$ in an execution of the protocol, $\S$ corrupts 
the same subset of parties as does $\A$. Denote by $\badI=\set{i_1,\ldots,i_{\badNum}}$ the set of indices of corrupt party.
At the end of the computation it outputs a possible view of the adversary $\A$.                 
To start the simulation, $\S$ invokes $\A$ 
on the set of inputs $\set{\initialInput{j} : j \in \badI}$,
the security parameter $1^\secParam$, and the auxiliary input $\aux$.

                        
\begin{description}
\item[Simulating the preliminary phase:]\quad
\label{stp:playRoleOfShareGenWithAbort} 
       
\begin{enumerate}
\item \label{stp:initD} $\abortedAfterPre  = \emptyset$.
\item  \label{stp:getInputs}    The simulator $\S$ receives a set of inputs $\set{\partyInput{j} : j \in \badI\setminus \abortedAfterPre }$ that $\A$
        submits to Functionality $\MultiShareGenDomainWithAbort_\numRounds$. \\
        If a party $p_j$ for $j \in \badI\setminus \abortedAfterPre$ does not submit an input, i.e., sends an $\abort{j}$ message, then,
                                                \begin{enumerate}
                                                                \item $\S$ sends $\abort{j}$ to the trusted party $T_{\MPCWithDealer}$.
                                                                \item $\S$ updates $\abortedAfterPre  = \abortedAfterPre  \cup \set{j}$.
                                                                \item \label{stp:afterAbortingHonestBefore} If $\size{\abortedAfterPre } < \goodNum$, then \stepref{getInputs} is repeated.
                                                                \item \label{stp:afterAbortNoHonestBefore} Otherwise ($\size{\abortedAfterPre } \geq \goodNum$), simulate premature termination with $i=1$.     
                                                \end{enumerate}
 
\item \label{stp:prepareMSGSForAdversary} 
$\S$ prepares outputs for the corrupted parties for Functionality $\MultiShareGenDomainWithAbort_\numRounds$:
The simulator $\S$ sets $\valueMultiParty{i}{J}=0$ for every $J \subseteq \range{\partNum}\setminus \abortedAfterPre $ s.t. $\goodNum \leq \size{J} \leq \badNum$
and for all $i\in \set{1,\ldots,\numRounds}$.
Then, $\S$ follows \stepref{defaultInput} and \steprefs{produceKeys1}{preperMSG} in the 
computation of Functionality $\MultiShareGenDomain_\numRounds$ (skipping the \steprefs{selectIStar}{chooseValues2})
to obtain shares for the parties.\footnote{These shares are temporary and will later be open to the actual 
values obtained from $T_{\MPCWithDealer}$ during the interaction rounds using the 
properties of Shamir's secret-sharing scheme.}

\item \label{stp:sendMessagesToAdversary} 
For each party $p_j$ s.t. $j\in \badI \setminus \abortedAfterPre $, the simulator $\S$ sends to $\A$:
\begin{itemize}
\item The verification key $K_{\rm ver}$.
\item The masking shares $\masking{\valueInnerSecretSharingShareSigned{i}{J}{j}}{j}$
for each $i\in \set{1,\ldots,\numRounds}$ and 
 for every $J \subseteq \range{\partNum}\setminus \abortedAfterPre $ s.t. $\goodNum \leq \size{J} \leq \badNum$ and
  $j\in J$.
\item The messages $M_{j,1}, \ldots, M_{j,\numRounds}$.

\end{itemize}
\item \label{stp:abortAfterMessages}If $\A$ sends an $\abort{j}$ for some party $p_j$ s.t. $j\in \badI \setminus \abortedAfterPre$ to $\S$,
then,
\begin{enumerate}
\item $\S$ sends $\abort{j}$ to the trusted party $T_{\MPCWithDealer}$.
\item $\S$ updates $\abortedAfterPre  = \abortedAfterPre  \cup \set{j}$.
\item \label{stp:afterAbortingHonest} If $\size{\abortedAfterPre } < \goodNum$, then
\steprefs{getInputs}{abortAfterMessages} are repeated.
\item \label{stp:afterAbortNoHonest} Otherwise ($\size{\abortedAfterPre } \geq \goodNum$), go to simulating premature termination with $i=1$.
\end{enumerate}

                                Otherwise ($\A$ sends a $\continueMSG$ message to $\S$), 
                                
\begin{enumerate}
        
        \item The simulator $\S$ denotes $\abortedI = \abortedAfterPre$.
        
   \item The simulator sends $\partyInput{j}$ to $T_{\MPCWithDealer}$ for every $j\in \badI \setminus \abortedAfterPre$
                                (and gets as response a $\proceedMSG$ message).
\end{enumerate}
                                
\end{enumerate} 

\item[Simulating interaction rounds:]\quad\\ 
Let $\J$ be the collection of subsets $J \subseteq \badI \setminus \abortedAfterPre$ s.t. $\goodNum \leq \size{J} \leq \badNum$.
I.e., $\J$ is the collection of sets of indices of active corrupt parties after the simulation of the executions of $\MultiShareGenDomainWithAbort_\numRounds$
To simulate round $i$ for $i=1,\ldots,\numRounds$, the simulator $\S$ proceeds as follows:
\begin{enumerate}  
\item \label{stp:getBitsInRound} $\S$ gets from the trusted party  $T_{\MPCWithDealer}$ 
 the values that the corrupted parties see. 
 That is, $\S$ gets a bit $\valueMultiPartyFromT{i}{J}$ for each $J \in \J$.\footnote{In \steprefs{innerSharesConstruction}{constructionSignedMSGS}, 
 the simulator $\S$ constructs the messages of the honest parties in order to 
 allow the corrupted parties in each $J \in \J$ to reconstruct $\valueMultiPartyFromT{i}{J}$.}

\item \label{stp:innerSharesConstruction}  
The simulator $\S$ selects shares for the inner secret-sharing scheme for corrupted parties:
For every $J \in \J$, the simulator $\S$ selects uniformly at random shares of $\valueMultiPartyFromT{i}{J}$ in a $\size{J}$-out-of-$\size{J}$ Shamir secret sharing scheme.
Denote these shares by $\set{\valueInnerSecretSharingShareFromT{i}{J}{j} : p_j \in \superSet{J}}$.\\ 
For each $p_j \in \superSet{J}$, let
$\valueInnerSecretSharingShareSignedFromT{i}{J}{j}\leftarrow
   (\valueInnerSecretSharingShareFromT{i}{J}{j}, i, J, j ,
   \operatorname{Sign}((\valueInnerSecretSharingShareFromT{i}{J}{j}, i, J, j) ,K_{\rm sign})).$

\item \label{stp:outerSharesConstruction}
The simulator $\S$ selects complementary shares for all honest parties:
For every $J \in \J$ and for each $j \in \badI \setminus \abortedAfterPre$,
\begin{enumerate}
        \item $S$ calculates $\alpha_j = \masking{\valueInnerSecretSharingShareSigned{i}{J}{j}}{j} \xor \valueInnerSecretSharingShareSignedFromT{i}{J}{j}$.
        \item $S$ selects uniformly at random $\goodNum$ shares of $\alpha_j$ uniformly at 
                                random over all possible selections of $\goodNum$ shares that 
                                are shares of $\alpha_j$ together with
                                the $\size{\badI \setminus \abortedAfterPre}-1$ shares 
             $$\set{\complParty{\valueInnerSecretSharingShareSigned{i}{J}{j}}{q}: q\in \badI \setminus (\abortedAfterPre \cup \set{j})}$$
                                produced in \stepref{prepareMSGSForAdversary} in the simulation of the preliminary phase.\\
                                (This is possible according to the property of Shamir's scheme)\\
                                Denote by $\complParty{\valueInnerSecretSharingShareSignedFromT{i}{J}{j}}{q}$ 
                                the complementary share that $\S$ selects for the honest party $p_q$ for 
                                a party $p_j$ s.t. $j\in (\badI \setminus \abortedAfterPre )\cap J$, where $J \in \J$.\\
\end{enumerate}
\item \label{stp:complementarySharesFromPreliminaryPhase} 
For party $p_j$ and a subset $J \notin \J$, let $\complParty{\valueInnerSecretSharingShareSigned{i}{J}{j}}{q}$ be the complementary share which was produced in \stepref{prepareMSGSForAdversary} in the simulation of the preliminary phase, i.e., $\complParty{\valueInnerSecretSharingShareSigned{i}{J}{j}}{q}$.

\item \label{stp:constructionSignedMSGS} 
Construct signed messages $m'_{q,i}$ for each honest party $p_q$ in round $i$ by concatenating:
\begin{enumerate}
        \item $q$.
        \item The round number $i$.
        \item The complement shares which were described in \stepref{complementarySharesFromPreliminaryPhase} above.
        \item The complement shares $\complParty{\valueInnerSecretSharingShareSignedFromT{i}{J}{j}}{q}$ 
                                for all $J \in \J$ and for all $j  \in J$  produced in
                                \stepref{outerSharesConstruction} for $p_q$.
\end{enumerate}
Then, $\S$ signs $m'_{q,i}$, i.e.,
$\S$ computes $M'_{q,i} \leftarrow \vect{m'_{q,i},\operatorname{Sign}(m'_{q,i},K_{\rm sign})}$.  
 
\item The simulator $\S$ sends all the message $M'_{q,i}$ on behalf of each honest party
$p_q$ to $\A$.

\item For every $j \in \badI \setminus \abortedAfterPre$ s.t. $\A$ sends an invalid or no message on behalf of $p_j$,
the simulator $\S$ sends 
$\abort{j}$ to $T_{\MPCWithDealer}$:
\begin{enumerate}

\item $\abortedI  = \abortedI  \cup \set{j}$.      

\item If $\size{\abortedI } \geq \partNum - \badNum$ go to premature termination step.                                                                                                                                
\item Otherwise, the simulator $\S$ proceeds to the next round.
\end{enumerate}
\end{enumerate}

\item[Simulating the premature termination step:] \quad
\begin{itemize}
\item \label{stp:prematureProcessFirstStep}
If $i=1$, then $S$ simulates $\A$'s interaction with Functionality $\FairMPC$ as follows:
\begin{enumerate}
\item $S$ receives from $\A$ the inputs of the active corrupt parties.
\item For every $j \in \badI \setminus \abortedI$: If $p_j$ does not send an input, 
 then $\S$ sends $\abort{j}$ to $T_{\MPCWithDealer}$ else, $\S$ sends $p_j$'s input to $T_{\MPCWithDealer}$.
\end{enumerate}   

\item \label{stp:prematureProcessNotFirstStep} If $i>1$, then $S$ simulates $\A$'s interaction with Functionality $\Reconstruction$ as follows:
\begin{enumerate}
\item\label{stp:prematureProcessNotFirstStep1} $S$ receives from $\A$ the inputs of 
 the active corrupt parties, i.e., $p_j$ s.t. $j \in \badI \setminus \abortedI$.
\item If an active corrupt party $p_j$, does not send an input, or its input is not appropriately signed or malformed, 
 then $\S$ sends $\abort{j}$ to $T_{\MPCWithDealer}$.
\end{enumerate}

\item $\S$ gets from $T_{\MPCWithDealer}$ a value $\sigma$ and sends it to $\A$.
 \item The simulator $\S$ outputs the sequence of messages exchanged between 
        $\S$ and the adversary $\A$ and halts.                                                
\end{itemize} 
\item[Simulating normal termination at the end of round $\numRounds$:] \quad
\label{stp:normalTerminationProcess}   
\begin{enumerate}
\item The simulator gets $\outValue$ from the trusted party $T_{\MPCWithDealer}$.
\item $\S$ constructs all the singed shares of the inner secret-sharing scheme 
for each $J \subseteq \range{\partNum}\setminus \abortedAfterPre$ s.t. $\goodNum \leq \size{J} \leq \badNum$ and for each honest party 
$p_j\in \superSet{J}$ as follows.


For each $J \notin \J$, the simulator
$S$ selects uniformly at random $\size{J\setminus \badI}$ shares of $\outValue$ uniformly at 
                                random over all possible selections of $\size{J\setminus \badI}$ shares that together with
                                the $\size{J\cap \badI}$ given shares 
                                $\set{\valueInnerSecretSharingShareSigned{i}{J}{j}: j \in \badI}$
                                (produced in \stepref{innerSharesConstruction} in the simulation of the preliminary phase)
                                are a sharing of $\outValue$ in a $\size{J}$-out-of-$\size{J}$ secret sharing scheme.\\
                                (This is possible according to the property of Shamir's scheme)\\
Denote these shares by 
$\set{\valueInnerSecretSharingShareFromT{\numRounds}{J}{j}}$. 

For each share $\valueInnerSecretSharingShareFromT{\numRounds}{J}{j}$, the simulator 
concatenates the corresponding identifying details, and signs 
them to obtain: 
$\valueInnerSecretSharingShareSignedFromT{\numRounds}{J}{j}\leftarrow
   (\valueInnerSecretSharingShareFromT{\numRounds}{J}{j}, \numRounds, J, j ,
   \operatorname{Sign}((\valueInnerSecretSharingShareFromT{\numRounds}{J}{j}, 
   \numRounds, J, j) ,K_{\rm sign})).$

\item \label{stp:sendMSGinTerminationProcess} For each 
honest party $p_j$, the simulator $\S$ sends to $\A$ the shares
$\valueInnerSecretSharingShareSignedFromT{\numRounds}{J}{j}$ for all subsets $J$, such 
that $p_j\in\superSet{J}$.
\item The simulator $\S$ outputs the sequence of messages exchanged between 
        $\S$ and the adversary $\A$ and halts.
\end{enumerate}
\end{description}

\subsection{Proving the Correctness of Protocol $\MPC_\numRounds$ and Protocol $\MPCRange_\numRounds$}

It can be proved that Protocol $\MPC_\numRounds$ is a secure implementation of the
(ideal) functionality of the dealer's in Protocol $\MPCWithDealer_\numRounds$.
That is, 
\begin{lemma}
\label{lem:secureImplementation}
Let $\badNum < 2\partNum/3$.
If enhanced trap-door permutations exist,
then Protocol $\MPC_{\numRounds}$ presented in
\secref{eliminatingDealer}, is a computationally-secure implementation (with full security)
of the dealer functionality in Protocol  
$\MPCWithDealer_{\numRounds}$.
\end{lemma}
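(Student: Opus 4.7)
The plan is to prove the lemma via a hybrid argument that gradually transforms the real execution of Protocol $\MPC_\numRounds$ (in the hybrid model with ideal access to $\MultiShareGenDomainWithAbort_\numRounds$, $\FairMPC$, and $\Reconstruction$) into the ideal execution of the dealer functionality of $\MPCWithDealer_\numRounds$, arguing that the simulator $\S$ described in \appref{OfflineProof} produces a view computationally indistinguishable from a real view. Because the three sub-functionalities are invoked with ideal access, composition lets us replace each of them by their real secure-with-abort (resp.\ fully secure, with honest majority) implementations at the end, yielding a computationally-secure implementation of the dealer's functionality.

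First I would verify the simulator step-by-step. In the preliminary phase, $\S$ invokes Functionality $\MultiShareGenDomainWithAbort_\numRounds$ with the adversary's inputs but initially assigns each $\valueMultiParty{i}{J}=0$; the crucial point is that the adversary, controlling at most $\badNum$ parties, receives only $\badNum$ of the $\badNum+1$ shares needed to reconstruct any outer share of the dealer. By the privacy property of Shamir's $(\badNum+1)$-out-of-$\partNum$ scheme, the complement shares and masking shares handed to $\A$ are distributed independently of the underlying values $\valueMultiParty{i}{J}$. Hence the view of $\A$ in the preliminary phase is identically distributed whether the placeholder values or the true dealer's values are being shared. During the interaction rounds, when $T_{\MPCWithDealer}$ reveals $\valueMultiPartyFromT{i}{J}$ to $\S$, the simulator uses the ``efficient completion'' property of Shamir's scheme (noted in \secref{secretSharing}) to pick the honest parties' complement shares uniformly from all completions consistent with both the $\badNum$ complement shares that the adversary already holds and with the correct opening to $\valueMultiPartyFromT{i}{J}$ under the known masking share. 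This yields a distribution on broadcast messages identical to the conditional distribution of honest parties' broadcasts in a real execution where the outer sharing was generated with the true $\valueMultiParty{i}{J}$'s at the outset.

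The main obstacle is handling the signatures and ruling out adversarial deviations that are not explicit aborts. I would argue: since the signing key $K_{\rm sign}$ is never seen by $\A$, the standard existential-unforgeability reduction shows that, except with negligible probability, $\A$ cannot produce any properly signed message $M_{j,i}$ or signed share $\valueInnerSecretSharingShareSigned{i}{J}{j}$ on behalf of an honest $p_j$ that was not produced by Functionality $\MultiShareGenDomainWithAbort_\numRounds$. Consequently, all adversarial deviations from the protocol reduce to refusing to broadcast or broadcasting malformed messages, which all honest parties detect simultaneously (broadcast channel) and which $\S$ translates faithfully into $\abort{j}$ messages to $T_{\MPCWithDealer}$. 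This is precisely what allows the simulator in the interaction rounds to keep $\abortedI$ synchronized with the real-world abort pattern, and what guarantees that when $\size{\abortedI}\ge \goodNum$ the active set has an honest majority, so Functionalities $\FairMPC$ (for $i=1$) and $\Reconstruction$ (for $i>1$) can be invoked with full security.

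Finally, I would combine the pieces: (i) conditioned on no signature forgery, the hybrid-world view of $\A$ when interacting with $\S$ and $T_{\MPCWithDealer}$ is distributed identically to its view when interacting with honest parties running Protocol $\MPC_\numRounds$ in the hybrid model; (ii) the outputs of the honest parties are determined in both worlds by the values $\valueMultiParty{i-1}{J}$ (via $\Reconstruction$) or $\valueMultiParty{\numRounds}{J}$ (via normal termination), and by construction $T_{\MPCWithDealer}$ delivers exactly these values. The negligible forgery probability contributes a negligible term to the statistical distance. By the composition theorem, replacing $\MultiShareGenDomainWithAbort_\numRounds$ with a secure-with-abort-and-cheat-detection protocol (existing under enhanced trap-door permutations, $O(1)$ rounds as per \cite{BOO10}), and $\FairMPC,\Reconstruction$ with fully-secure honest-majority protocols (existing unconditionally via~\cite{GMW87} with the honest majority guaranteed by $\badNum<2\partNum/3$ once $\goodNum$ parties have aborted), yields the lemma in the plain model.
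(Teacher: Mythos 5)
Your proposal is correct and follows essentially the same route as the paper: the paper's own treatment consists of exactly the simulator you rely on (\appref{OfflineProof}) and then omits the indistinguishability argument, deferring to the analogous proof in~\cite{BOO10FULL}, while you supply the standard argument (privacy and efficient completion of the outer/inner sharings, signature unforgeability reducing deviations to detectable aborts, honest majority for $\FairMPC$ and $\Reconstruction$, and the composition theorem). One small inaccuracy that does not affect the claim: the fully secure honest-majority sub-protocols are not \emph{unconditional} via~\cite{GMW87}; they also rely on computational assumptions, which is harmless here since enhanced trap-door permutations are already assumed.
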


In ~\cite{BOO10FULL}, a similar framework to the one used in this paper is used:
first a protocol with a dealer for the coin-tossing problem is presented 
and, then, a real-world protocol that is a computationally-secure implementation (with full security)
of the dealer functionality is described.
In~\cite{BOO10FULL}, a simulator for this protocol is given.
This simulator is similar to the simulator described in \appref{OfflineProof},
than a full proof for the simulator is provided.
As the proof is very similar to the proof of our simulator, we omit the proof.

To conclude the proof, as $\MPCWithDealer_\numRounds$ is a $1/p$-secure implementation of $\F$ and
 $\MPC_\numRounds$ is a secure implementation of the
(ideal) functionality of the dealer in Protocol $\MPCWithDealer_\numRounds$,
by the composition theorem of Canetti~\cite{Can00} we conclude that 
$\MPC_\numRounds$ $1/p$-secure implementation of $\F$.
That is, \thmref{mainDomain} is proved.

Next, we claim that $\MPCRange_\numRounds$ is a secure implementation of the
(ideal) functionality of the dealer in Protocol $\MPCWithDealerRange_\numRounds$.
That is,
\begin{lemma}
\label{lem:MPCRange}
Let $\badNum < 2\partNum/3$.
If enhanced trap-door permutations exist,
then Protocol $\MPCRange_{\numRounds}$ described in
\secref{polynomialRange}, is a computationally-secure implementation (with full security)
of the dealer functionality in Protocol  
$\MPCWithDealerRange_{\numRounds}$.
\end{lemma}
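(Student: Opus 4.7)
The plan is to mirror the argument for \lemref{secureImplementation} almost verbatim. The key structural observation is that Protocols $\MPCRange_\numRounds$ and $\MPC_\numRounds$ differ only in the sampling of $\valueMultiParty{i}{J}$ for $i<\istar$, which occurs entirely inside the ideal share-generation functionality; the inner $|J|$-out-of-$|J|$ secret sharing, the outer $(\badNum+1)$-out-of-$\partNum$ sharing of \constructref{shareWithRespect}, the signing scheme, the round-structure of broadcasts, the premature termination mechanism, and the normal termination are all identical in the two protocols. Consequently, I would reuse the simulator of \appref{OfflineProof} essentially unchanged, with the trusted party $T_{\MPCWithDealer}$ replaced by the corresponding trusted party computing the dealer functionality of $\MPCWithDealerRange_\numRounds$.

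The steps in order would be the following. First, I would set up the hybrid model with the three ideal functionalities ($\MultiShareGenDomainWithAbort_\numRounds$ for the modified share generation, $\FairMPC$, and $\Reconstruction$), noting that full security of the last two is available because whenever $\partNum-\badNum$ parties have aborted, an honest majority among the remaining parties is guaranteed by the hypothesis $\badNum<2\partNum/3$. Second, I would run the preliminary-phase simulation exactly as in \appref{OfflineProof}: commit to shares of a default secret (e.g., $\valueMultiParty{i}{J}=0$) via the outer scheme, hand $\A$ its masking shares and verification key, and later, upon receiving each true value $\valueMultiPartyFromT{i}{J}$ from the trusted party in round $i$, patch the honest parties' complementary shares to be consistent with $\valueMultiPartyFromT{i}{J}$; this patching is possible by the standard completion property of Shamir's scheme, since the adversary holds at most $\badNum$ outer shares. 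Third, I would simulate premature termination and normal termination as in \appref{OfflineProof}. Finally, I would invoke Canetti's composition theorem~\cite{Can00} to combine this secure hybrid-model implementation with the $1/p$-secure protocol $\MPCWithDealerRange_\numRounds$, thereby completing the proof of \thmref{mainRange}.

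The hard part will be verifying that none of the cryptographic reductions in the original proof is sensitive to the distribution from which the ideal dealer draws $\valueMultiParty{i}{J}$. Unforgeability of the signature scheme, the $(\badNum+1)$-out-of-$\partNum$ secrecy of the outer Shamir scheme, the $|J|$-out-of-$|J|$ secrecy of the inner scheme, and the security of the secure-with-abort-and-cheat-detection realization of the modified share-generation functionality all speak only about the syntactic structure and length of what is shared, signed, or evaluated; they are oblivious to whether $\valueMultiParty{i}{J}$ was computed from $f_\secParam$ on masked inputs or sampled uniformly from $\partyRange$ with probability $1/(2p)$. Consequently the indistinguishability of the simulator's output from the real-world view carries through without modification, and the remaining work is routine bookkeeping, essentially transcribing the detailed argument of~\cite{BOO10FULL} with the notational changes above.
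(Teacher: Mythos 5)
Your proposal is correct and takes essentially the same route as the paper: the paper's proof is precisely the observation that the two protocols differ only in \stepref{beforeIStar} of Functionality $\MultiShareGenDomain_\numRounds$, and since the simulator of \appref{OfflineProof} never refers to that step (it only forwards the values $\valueMultiPartyFromT{i}{J}$ it receives from the trusted party, regardless of how they were sampled), the identical simulator works for $\MPCRange_\numRounds$. Your additional remarks about the cryptographic reductions being oblivious to the sampling distribution, and the final composition step, match the paper's surrounding argument for \thmref{mainRange}.
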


\begin{proof}
Recall that the only difference between Protocol $\MPC_\numRounds$ and Protocol $\MPCRange_\numRounds$ 
is in the way that the values that the parties see prior round $i^\star$ are produced, i.e.,
the difference is in Functionality $\MultiShareGenDomain_\numRounds$.
Specifically, in \secref{polynomialRange} we presented a modification in \stepref{beforeIStar} in Functionality $\MultiShareGenDomain_\numRounds$ in order to get Protocol $\MPC_\numRounds$ from Protocol $\MPCRange$.
Now, observe that the simulator presented above does not refer to \stepref{beforeIStar} of Functionality $\MultiShareGenDomain_\numRounds$ in any step.
Therefore, the simulator presented in \appref{OfflineProof} for Protocol $\MPC_\numRounds$
is also a simulator for Protocol $\MPCRange_\numRounds$.
\end{proof}

\medskip
\clmref{equalValuesFromRange}  and \lemref{MPCRange} imply \thmref{mainRange}. 

\end{document}